%https://journals.aps.org/pra/edannounce/10.1103/PhysRevA.107.010001
%https://journals.aps.org/authors/supplemental-material-instructions
\documentclass[%
 amsmath,
 pra,
 a4paper,
 twocolumn,
 superscriptaddress
%floatfix,
]{revtex4-2}

%%%%%%%%%%%%%%%%%%%%%%%%%%%%%%%%   Begin Packages
\usepackage[utf8]{inputenc}  %Input what you want e.g., é, ł, a, ü
\usepackage[T1]{fontenc}     %Output what you want e.g., é, ł, a, ü
\usepackage[american]{babel}  %Do hyphenation according to american english
\usepackage[colorlinks,citecolor=blue,linkcolor=blue,urlcolor=blue]{hyperref}  %Hyperlinks (pink, green, blue)
\usepackage{graphicx} % Package to insert external figures
\usepackage{amsmath,amsthm,dsfont} %Useful math packages
\usepackage{braket,,amssymb}
%%%%%%%%%%%%%%%%%%%%%%%%%%%%%%%%   
%\usepackage{comment}
\usepackage[normalem]{ulem}
\usepackage{graphicx,xcolor}
\usepackage[caption=false]{subfig}
\newtheorem{theorem}{Theorem}

\newtheorem{condition}[theorem]{Condition}

\newtheorem{observation}[theorem]{Observation}

\DeclareMathOperator{\tr}{tr}
\newcommand{\id}{\mathds{1}}

\newcommand{\nc}{\mathcal{N}_C}

\begin{document}
\title{
Fate of multiparticle entanglement when one particle becomes classical
}

\author{Zhen-Peng Xu}%
\email{zhen-peng.xu@uni-siegen.de}
\affiliation{School of Physics and Optoelectronics Engineering, Anhui University, 230601 Hefei, People’s Republic of China}
\affiliation{Naturwissenschaftlich-Technische Fakult\"at, Universit\"at Siegen, Walter-Flex-Stra{\ss}e 3, 57068 Siegen, Germany}

\author{Satoya Imai}
\email{satoyaimai@yahoo.co.jp}
\affiliation{Naturwissenschaftlich-Technische Fakult\"at, Universit\"at Siegen, Walter-Flex-Stra{\ss}e 3, 57068 Siegen, Germany}

\author{Otfried Gühne}%
\email{otfried.guehne@uni-siegen.de}
\affiliation{Naturwissenschaftlich-Technische Fakult\"at, Universit\"at Siegen, Walter-Flex-Stra{\ss}e 3, 57068 Siegen, Germany}

\date{\today}
%TC:ignore
\begin{abstract}
We study the change of multiparticle entanglement if one particle becomes classical, in the 
sense that this particle is destructed by a measurement, but the gained information is encoded 
into a new register. We present an estimation of this change for different entanglement measures 
and ways of encoding.
We first simplify the numerical calculation to analyze the change of entanglement under 
classicalization in special cases. Second, we provide general upper and lower bounds on
the entanglement change. Third, we show that the entanglement change caused by classicalization 
of one qubit only can still be arbitrarily large. Finally, we discuss cases where no entanglement 
is left under classicalization for any possible measurement. Our results shed light on the 
storage of quantum resources and help to develop a novel direction in the field of quantum 
resource theories.
\end{abstract}

\pacs{03.65.Ta, 03.65.Ud}
\maketitle

%%%%%%%%%%%%%%%%%%%%%%%%%%%%%%%%%%%%%%%%%%%%%%%%%%%%%%%%%%%%%%%%%%%%%%%%%%%%%%%%%%%%%%%

\section{Introduction}
Different types of quantum resources~\cite{chitambar2019quantum} are 
essential for quantum information tasks, like quantum computation 
\cite{divincenzo1995quantum}, quantum key distribution \cite{scarani2009security}, 
and quantum metrology \cite{giovannetti2006quantum}, where they can provide 
a decisive advantage over the classical regime. One main problem for many 
quantum resources is their sensitivity to the disturbance from the 
environment. Their protection with tools like quantum error 
correction~\cite{lidar2013quantum} is usually expensive, especially if 
larger systems are considered. In practice, some fraction of the particles
of a larger quantum system can inevitably become classical, e.g., caused 
by a measurement or decoherence process. In fact, the particles 
may even be completely lost. 

It is a natural question to ask how multiparticle entanglement \cite{horodecki2009quantum, guhne2009entanglement} is affected by 
such processes. Many
works have considered the influence of decoherence on multiparticle entanglement~\cite{simon2002robustness,dur2004stability,carvalho2004decoherence,hein2005entanglement,guhne2008multiparticle,aolita2008scaling}.
Other works considered the robustness of multiparticle entanglement under particle
loss~\cite{briegel2001persistent,brunner2012persistency,neven2018entanglement,luo2021robust}.  
Moreover, the sharp change of bipartite entanglement caused by the complete loss of one particle in 
one party has been studied as the concept of lockable
entanglement~\cite{horodecki2005locking,christandl2005uncertainty,leung2009survey, yang2009squashed}.
There can, however, still be information left in the environment after loss 
of {particles}. For example, in the case of the Stern-Gerlach experiment, 
the left information is given by the location of the spots on the screen. 
As another example one can consider the decay of particles due to decoherence, 
where it may be reasonable to gather some information from the particles before their complete decay. The usefulness of this classical information has been extensively explored in the form of the entanglement of
assistance~\cite{divincenzo1998entanglement}, where a third party (Charly) 
optimizes the measurement and the resulting information to assist the two 
original parties (Alice and Bob) to reveal as much quantum entanglement as 
possible. Most research on the entanglement of assistance has focused on the 
case where the global state is
pure~\cite{li2010evolution,smolin2005entanglement,gour2005deterministic}. 
As it turns out~\cite{divincenzo1998entanglement}, the entanglement of 
assistance depends only on the reduced state for Alice and Bob, and the exact three-partite initial state is not important.

\begin{figure}[t!]
  \centering
  \includegraphics[width=0.45\textwidth]{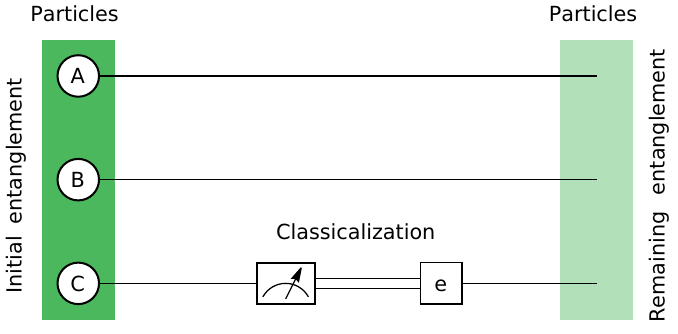}
  \caption{The change of multiparticle entanglement  if the particle $C$ 
  becomes classical. In this process of classicalization the particle $C$ 
  is first {destroyed} by the measurement and then the measurement information 
  is encoded in a new register. This paper asks for which classicalization 
  procedure the change of entanglement is minimal.}
  \label{fig:scenario}
\end{figure}

In this paper we consider a {different scenario}: One or more particles 
in a multiparticle system is destructed by a measurement. The gained classical 
information is then encoded in a quantum state. Our question is how much  
the multiparticle entanglement is affected in this process of classicalization, 
see also Fig.~\ref{fig:scenario}. 
This scenario is practically relevant, as one may not have the 
perfect `assistance' when the size and performance of the 
register system are limited. Consequently, our approach can 
provide guidance for the storage of quantum entanglement 
robust to particle loss and for finding the optimal strategy 
of entanglement recovery with the gained classical 
information and a small register system.
In comparison with the concept of quantum assistance,  we consider mixed quantum 
states where the entanglement is stored and it is not the aim of the measured 
party to increase the bipartite entanglement between the remaining ones. 
Most importantly, the initial quantum state plays a major role in the change 
of the entanglement due to classicalization.
We stress that there are further related concepts. The so-called hidden entanglement~\cite{d2014hidden} has been introduced as the difference 
between the entanglement without the decomposition information of a mixed 
state and the one with the decomposition information.  Besides, the role 
of one particle in the change of entanglement has also been considered 
in distributed entanglement~\cite{chuan2012quantum,streltsov2012quantum}, 
where the particle is transferred from one party to another one rather 
than it is destroyed.

%%%%%%%%%%%%%%%%%%%%%%%%%%%%%%%%%%%%%%%%%%%%%%%%%%%%%%%%%%%%%%%%%%%%%%%%%%%%%%%%%%%%%%%
\section{Notations and definitions}\label{sec:notations}
We focus on tripartite {systems} in this paper, other multipartite {systems}
can be analyzed similarly. We denote the {initial state} as $\rho_{ABC}$. First, 
suppose that one party of this state {is} measured in a process that completely 
destroys the measured party, such as the detection of the photon polarization.

Without loss of generality, we here assume that the destructive measurement 
${M}=\{m_i\}$ acts on the party $C$. After the measurement, the particles 
belonging to party $C$ {vanish}, but the post-measurement information from 
the associated outcome is available. That is, {each classical outcome $i$} 
can be encoded into a new register system $E$ as associated post-measurement 
states $\tau_i$. 
We say that this encoding is perfect, if 
$\tau_i = \ket{i}\!\bra{i}$ for an orthogonal basis $\{\ket{i}\}$. In 
practice, of course, the encoding may not be perfect due to the interaction 
with the environment.

{
We can write the above process as the operation
\begin{align}
\label{eq:operation}
\Phi_C(\rho_{ABC})= \sum_i p_i \sigma_i \otimes \tau_i,
\end{align}
where $p_i = \tr(\rho_{ABC} m_i)$, $\sigma_i = \tr_C(\rho_{ABC} m_i)/p_i$ and $\tau_i$ is the register state related to the outcome $i$.
We say that this encoding is {perfect}, if 
$\tau_i = \ket{i}\!\bra{i}$ for an orthogonal basis $\{\ket{i}\}$. In 
practice, of course, the encoding may not be perfect due to the interaction 
with the environment or the limited memory of the register.
}

We denote by $\mathcal{N}_C$ the set of {all possible operations in the form in Eq.~\eqref{eq:operation} on the party $C$.} We stress that the set $\mathcal{N}_C$ is equivalent to the set of entanglement breaking channels~\cite{horodecki2003entanglement} acting on the party $C$. So far, we have not imposed any assumption on the destructive measurements and the encoding, but in practice, there can be extra limitations on them.

Our central question is how much the global entanglement in $\rho_{ABC}$ is changed by the operation $\Phi_C$. The maximal change happens usually when there is no classical information left or it has not been employed, that is, the $\tau_i$ 
are the same for all outcomes $i$'s, a similar question has been explored already 
under the concept of lockable entanglement~\cite{horodecki2005locking}, see more details 
in Sec.~\ref{sec:lockability}. Here we are particularly interested in the minimal amount 
of entanglement change {with remaining classical information}, {which corresponds to the optimal operation $\Phi_C$ to keep as much entanglement as possible.}

For this purpose, we define the quantity $\Delta_{\mathcal{E}}(\rho_{ABC})$ 
as
\begin{align}
  \Delta_{\mathcal{E}}(\rho_{ABC}) &= \min_{\Phi_C\in \mathcal{N}_C}
  \left\{\mathcal{E}[\rho_{ABC}] - \mathcal{E}[\Phi_C(\rho_{ABC})]
  \right\},
\end{align}
where $\mathcal{E}$ is a tripartite entanglement measure. The practical choice of $\mathcal{E}$ may depend on the quantum information task under consideration.
{For the choice} of entanglement measures, it is necessary to require that 
$\mathcal{E}$ does not increase under local operations and classical communication (LOCC)~\cite{chitambar2014everything}{, called monotonicity under LOCC}.
In this case, $\Delta_{\mathcal{E}}(\rho_{ABC})$ is always non-negative.

Two  further remarks are in order. First, if $\mathcal{E}$ is a measure of 
genuine multipartite entanglement, then $\Delta_{\mathcal{E}}(\rho_{ABC}) = 
\mathcal{E}[\rho_{ABC}]$, since $\Phi_C(\rho_{ABC})$ is always separable with 
respect to the bipatition $AB|C$ for any $\Phi_C$ and $\rho_{ABC}$. Second, 
{if we restrict the set $\mathcal{N}_C$ with limitations on measurements and 
register states, the amount of $\Delta_{\mathcal{E}}(\rho_{ABC})$ can be affected. 
One example is to consider the operations which keep the dimension of the system.}

%%%%%%%%%%%%%%%%%%%%%%%%%%%%%%%%%%%%%%%%%%%%%%%%%%%%%%%%%%%%%%%%%%%%%%%%%%%%%%%%%%%%%%%
\section{Simplification}\label{sec:simplification}
In general it is difficult to calculate $\Delta_{\mathcal{E}}(\rho_{ABC})$, due to the complexity of characterizing the set $\mathcal{N}_C$.
Here we provide a method to simplify the calculation.
By default, we assume the entanglement measure $\mathcal{E}$ is monotonic under LOCC.
Then we have:

\begin{observation}\label{ob:encdingextreme}
  If the entanglement measure $\mathcal{E}$ is convex, we only need to consider $M = \{m_i\}$ as an extremal point in the considered measurement set $\mathcal{M}$. 
  More precisely:
\begin{align}\label{eq:simplify}
  \hspace{-0.5em}\Delta_{\mathcal{E}}(\rho_{ABC})\! =\! \min_{M\in \partial \mathcal{M}} \left\{\mathcal{E}[\rho_{ABC}] \!-\! \sum_i p_i \mathcal{E}[\sigma_i\!\otimes\! |0\rangle\langle 0|]\right\},  
\end{align}
where $\partial \mathcal{M}$ is the set of extremal points in $\mathcal{M}$, $p_i = \tr(\rho_{ABC} m_i)$ and $\sigma_i = \tr_C(\rho_{ABC} m_i)/p_i$.
\end{observation}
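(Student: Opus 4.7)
The plan is to decompose the optimization in the definition of $\Delta_{\mathcal{E}}$ into two independent simplifications: (i) for any fixed measurement, the register states $\{\tau_i\}$ can be taken to implement a perfect encoding, and (ii) the resulting objective, now parametrized only by $M$, is minimized on an extremal POVM.

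For step (i), I would fix any POVM $M = \{m_i\}$ and argue that the perfect encoding dominates. The measure-and-prepare map $\rho_E \mapsto \sum_i \langle i | \rho_E | i\rangle\, \tau_i$ is a local quantum channel on the register alone, and, applied to the perfectly-encoded state $\omega := \sum_i p_i\, \sigma_i \otimes |i\rangle\langle i|$, it reproduces $\Phi_C(\rho_{ABC}) = \sum_i p_i\, \sigma_i \otimes \tau_i$ exactly. LOCC monotonicity of $\mathcal{E}$ then yields $\mathcal{E}[\Phi_C(\rho_{ABC})] \le \mathcal{E}[\omega]$, so the perfect encoding is optimal. I would next evaluate $\mathcal{E}[\omega]$: convexity gives the upper bound $\mathcal{E}[\omega] \le \sum_i p_i\, \mathcal{E}[\sigma_i \otimes |i\rangle\langle i|]$, and strong LOCC monotonicity---applied to the measurement on $E$ in the basis $\{|i\rangle\}$, whose outcome $i$ occurs with probability $p_i$ and leaves the state $\sigma_i \otimes |i\rangle\langle i|$---yields the matching lower bound. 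Local unitary invariance of $\mathcal{E}$ finally reduces $|i\rangle\langle i|$ to $|0\rangle\langle 0|$ in each term.

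For step (ii), write any $M \in \mathcal{M}$ as $M = \sum_\alpha q_\alpha M^{(\alpha)}$ with $M^{(\alpha)} = \{m_i^{(\alpha)}\} \in \partial \mathcal{M}$, so that $m_i = \sum_\alpha q_\alpha m_i^{(\alpha)}$. The induced probabilities and post-measurement states decompose as $p_i = \sum_\alpha q_\alpha p_i^{(\alpha)}$ and $\sigma_i = \sum_\alpha (q_\alpha p_i^{(\alpha)}/p_i)\, \sigma_i^{(\alpha)}$. Convexity of $\mathcal{E}$ applied to each $\sigma_i \otimes |0\rangle\langle 0|$, followed by summation over $i$, yields $\sum_i p_i\, \mathcal{E}[\sigma_i \otimes |0\rangle\langle 0|] \le \sum_\alpha q_\alpha \sum_i p_i^{(\alpha)}\, \mathcal{E}[\sigma_i^{(\alpha)} \otimes |0\rangle\langle 0|]$, a convex combination of the objectives associated with the extremal components. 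Hence the supremum of the retained entanglement over $\mathcal{M}$ is attained at some $M^{(\alpha_0)} \in \partial \mathcal{M}$, giving Eq.~\eqref{eq:simplify}.

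The main delicate point is the lower bound in step (i), which invokes the \emph{strong} form of LOCC monotonicity---monotonicity of the outcome-averaged entanglement---rather than the bare in-expectation monotonicity mentioned in the paper. This stronger property is standard for convex entanglement measures, in particular those defined via convex-roof constructions, but it is worth flagging as a mild extra hypothesis. The extremal-measurement step, by contrast, rests only on convexity of $\mathcal{E}$ and the convex structure of the POVM set, and is a purely convex-analytic observation.
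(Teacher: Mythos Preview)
Your proof is correct and follows essentially the same strategy as the paper's---convexity to reduce to extremal POVMs, LOCC monotonicity to reduce to perfect encoding, and the flag identity to obtain the sum form---only with the order of the first two reductions swapped. The paper sidesteps your explicit appeal to strong monotonicity by citing the flag condition directly (Theorem~2 of Ref.~\cite{horodecki2005simplifying}), but that result rests on the same underlying hypothesis, so your caveat is well placed.
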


{The proof of Observation~\ref{ob:encdingextreme} is given in Sec.~A in the Supplemental Material~\cite{supplementalmaterial}.}
The Observation shows that the actual calculation of $\Delta_{\mathcal{E}}(\rho_{ABC})$ can be reduced to the set of extremal points in $\mathcal{M}$,  which { has been well characterized in Ref.}~\cite{d2005classical}.
In the following, we will address this 
problem for two special cases.
The first case is that the party $C$ is a qubit and the measurement information from the outcomes is also registered in a qubit system $E$~\cite{ruskai2003qubit}.
For convenience, we denote by $\mathcal{N}_1$ the set of those operations, which is equivalent to the set of all entanglement breaking channels mapping qubit to qubit.
The second case is that the measurement ${M}$ is a dichotomic POVM~\cite{d2005classical}, where $C$ is not necessarily a qubit.
We denote this set as $\mathcal{N}_2$.

Now we can present the following observation:
\begin{observation}\label{ob:dichotomic}
For a convex entanglement measure $\mathcal{E}$, {if we replace $\mathcal{N}_C$ by $\mathcal{N}_1$ or $\mathcal{N}_2$ in the definition of $\Delta_{\mathcal{E}}$}, then the value of $\Delta_{\mathcal{E}}(\rho_{ABC})$ can be achieved with projective measurements.
\end{observation}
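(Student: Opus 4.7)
The plan is to apply Observation~\ref{ob:encdingextreme} and then analyse the extremal measurements in the two restricted families separately, treating $\mathcal{N}_2$ and $\mathcal{N}_1$ by different arguments.

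For $\mathcal{N}_2$ (dichotomic POVMs on $C$), any two-outcome POVM has the form $\{m,\,I-m\}$ with $0\le m\le I$, so the measurement set is affinely identified with the operator interval $[0,I]$ in the Löwner order. Its extremal points are exactly the operators with spectrum contained in $\{0,1\}$, i.e., the orthogonal projectors. Hence every extremal dichotomic POVM is projective, and the claim follows directly from Observation~\ref{ob:encdingextreme}.

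For $\mathcal{N}_1$ (qubit-to-qubit EB channels), I would first observe that the functional $\Phi_C \mapsto \mathcal{E}[\Phi_C(\rho_{ABC})]$ is convex, since $\mathcal{E}$ is assumed convex and $\Phi_C\mapsto \Phi_C(\rho_{ABC})$ is linear. By Bauer's maximum principle, the maximum of this convex functional over the convex set $\mathcal{N}_1$, which corresponds to the minimum of $\Delta_{\mathcal{E}}$, is attained at an extremal point of $\mathcal{N}_1$. The extremal qubit-to-qubit entanglement-breaking channels were classified in Ref.~\cite{ruskai2003qubit}; they are measure-and-prepare channels of the form $\rho\mapsto \tr(P\rho)|a\rangle\!\langle a| + \tr((I-P)\rho)|b\rangle\!\langle b|$ with $P$ a rank-one projector on the input qubit and $|a\rangle,|b\rangle$ pure qubit states. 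The associated measurement $\{P,I-P\}$ is projective, which finishes the argument.

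The hard part is the $\mathcal{N}_1$ case: unlike for $\mathcal{N}_2$, the statement is not a fact about POVMs alone, because a generic extremal POVM on a qubit can have up to four rank-one outcomes, as in Ref.~\cite{d2005classical}. What makes $\mathcal{N}_1$ special is that the qubit \emph{register} leaves no room to encode more than two perfectly distinguishable outcomes, and Ruskai's classification is the cleanest way to cash this in. If one preferred to avoid invoking that result, an alternative would be a direct convex-splitting argument: for any $\Phi_C\in\mathcal{N}_1$ written as $\sum_i \tr(m_i\rho)\tau_i$, use that any qubit state $\tau_i$ lies on a segment between two orthogonal pure states to split $\Phi_C$ into a convex combination of measure-and-prepare channels whose outputs are orthogonal pure states, and then apply the $\mathcal{N}_2$ analysis to the resulting two-outcome extremes. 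In either route, convexity of $\mathcal{E}$ selects the projective measurement as an optimizer.
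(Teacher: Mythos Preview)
Your argument is correct and mirrors the paper's proof: for $\mathcal{N}_2$ both you and the paper reduce to extremal dichotomic POVMs and identify these as projectors (the paper via the disjoint-support criterion of Ref.~\cite{d2005classical}, you via the operator interval $[0,I]$), and for $\mathcal{N}_1$ both appeal to Ruskai's result~\cite{ruskai2003qubit} that qubit entanglement-breaking channels are convex combinations of classical-quantum channels, which carry projective measurements by definition. The only stylistic difference is that the paper invokes Observation~\ref{ob:encdingextreme} uniformly, whereas you apply Bauer's principle directly to the channel set in the $\mathcal{N}_1$ case; your remark that the qubit register, not the qubit POVM structure, is what forces two outcomes is exactly the point.
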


{The proof of Observation~\ref{ob:dichotomic} is given in Sec.~B in the Supplemental Material~\cite{supplementalmaterial}.}
Observation~\ref{ob:encdingextreme} and Observation~\ref{ob:dichotomic} make the numerical calculation possible with only few parameters as in the following examples.

%%%%%%%%%%%%%%%%%%%%%%%%%%%%%%%%%%%%%%%%%%%%%%%%%%%%%%%%%%%%%%%%%%%%%%%%%%%%%%%%%%%%%%%
\subsection{Example: three-qubit systems}
Here we look at three-qubit systems and analyze $\Delta_{\mathcal{E}}(\rho_{ABC})$ 
{with $\mathcal{N}_1$ and $\mathcal{N}_2$.} Important examples of multipartite 
entanglement measures that satisfy {convexity} and monotonicity under LOCC are the 
multipartite negativity~\cite{sabin2008classification} and multipartite squashed entanglement~\cite{yang2009squashed,christandl2004squashed}:
\begin{align}
    N_{ABC}(\rho_{ABC})&= N_{AB|C} +N_{BC|A} +N_{AC|B},\\
    E_{sq}(\rho_{ABC})&=\min_{\gamma_{ABCX}} \frac{1}{2} I(A:B:C|X).
\end{align}
Here, $N_{X|Y}= \left|\sum_{\lambda_i<0}\lambda_i\right|$ is the negativity 
for a bipatition $X|Y$ with eigenvalues $\lambda_i$ of the partial transposed 
state $\rho^{T_Y}$ with respect to the subsystem $Y$, where $Y=A,B,C$.
Also, $I(A:B:C|X)=S(AX)+S(BX)+S(CX)-S(ABCX)-2S(X)$ is the quantum conditional 
mutual information, where $\gamma_{ABCX}$ is any extension of $\rho_{ABC}$, 
i.e., $\rho_{ABC} = \tr_X[\gamma_{ABCX}]$, and $S(M)$ is the von Neumann entropy 
of system $M$. For a pure state $\rho_{ABC}$, the quantum conditional mutual 
information can be simplified as $I(A:B:C|X)=S(A)+S(B)+S(C)$, which is 
independent of system $X$. 

As the first example, we consider the superposition of Greenberger-Horne-Zeilinger (GHZ) states 
and W states:
\begin{align} \label{eq:pureghzestate}
   \ket{\psi(p)}=\sqrt{p} \ket{\rm GHZ} + \sqrt{1-p} \ket{\rm W}, 
\end{align}
where
$0\leq p \leq 1$, 
$\ket{\rm GHZ} = (\ket{000}+\ket{111})/\sqrt{2}$, and
$\ket{\rm W} = (\ket{001}+\ket{010}+\ket{100})/\sqrt{3}$.
The numerical relation between $\Delta_\mathcal{E}$ and $p$ is presented 
in Fig.~\ref{fig:pureghzWW} for $\mathcal{E} = N_{ ABC}, \, E_{sq}$, 
details about the optimization method are given in in Sec.~C in the Supplemental Material~\cite{supplementalmaterial}.
Interestingly, we find that the maximal value of $\Delta_{\mathcal{E}}(\ket{\psi})$ 
is given by the W state, while the minimal value is not achieved by the GHZ state 
but the state at $p = 0.4$. {We remark that both of $N_{ABC}(\ket{\psi(p)})$ 
and $E_{sq}(\ket{\psi(p)})$ are minimized when $p=0.4$. However, it is an open problem
to understand why this state should also have minimal entanglement change.}

{
Moreover, let us consider a three-qutrit case
and compute the tuple of
$\Delta_\mathcal{E}$ for 
$\mathcal{E} = (N_{ABC}, \, E_{sq})$.
The GHZ state 
$\sum_{i=0}^2\ket{iii} /\sqrt{3}$
has $(1.667,\, 0.792489)$,
while the state 
$(
\ket{012}+\ket{120}+\ket{201}
+\ket{021}+\ket{210}+\ket{102}
)/\sqrt{6}$
has $(1.86747,\, 0.971332)$. More deitals are in Sec.~C in the Supplemental Material~\cite{supplementalmaterial}.

}
%%%%%%%%%%%%%%%%%%%%%%%%%%%%%%%%%%%%%%%%%%%%%%%%%%%%%%%%%%%%%%%%%%%%%%%%%%%%%%%%%%%%%%%
\begin{figure}[tpb]
  \centering
  \includegraphics[width=0.45\textwidth]{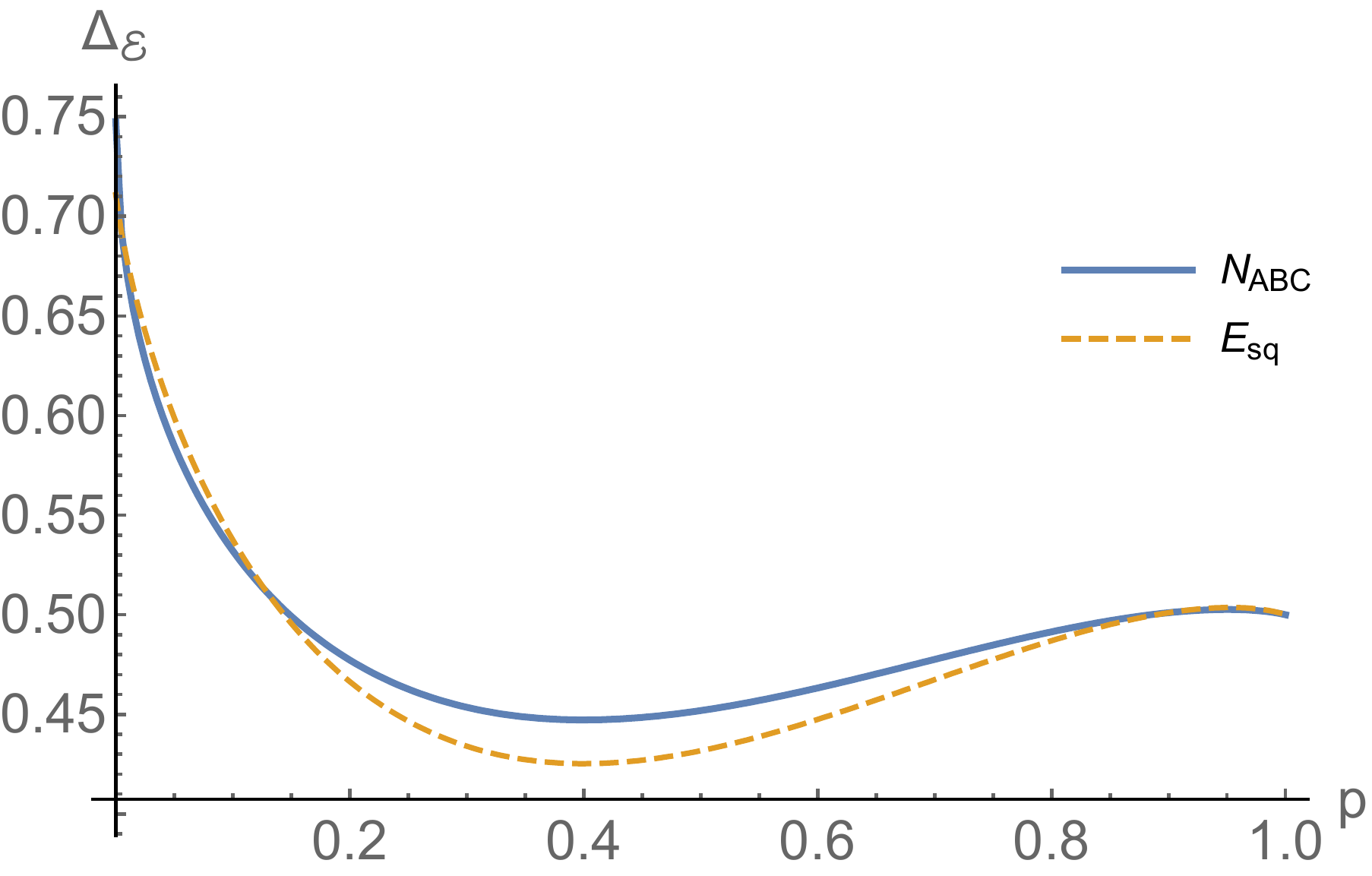}
  \caption{$\Delta_{\mathcal{E}}$ with $N_{ABC}$ and $E_{sq}$ for $\ket{\psi(p)}=\sqrt{p} \ket{\rm GHZ} + \sqrt{1-p} \ket{\rm W}$.}
  \label{fig:pureghzWW}
\end{figure}
%%%%%%%%%%%%%%%%%%%%%%%%%%%%%%%%%%%%%%%%%%%%%%%%%%%%%%%%%%%%%%%%%%%%%%%%%%%%%%%%%%%%%%%

\section{General bounds}\label{sec:general}
{In general, it may be hard to obtain the exact value of $\Delta_{\mathcal{E}}(\rho_{ABC})$ 
for some entanglement measure $\mathcal{E}$. To address this situation, we now derive upper
and lower bounds that can be useful for the estimation.} First, we present a general lower 
bound.
\begin{observation} \label{ob:lower}
 For a convex entanglement measure $\mathcal{E}$, {and for the set $\mathcal{N}_C$},
  we have
  \begin{equation}\label{ob:lowerbound}
    \Delta_{\mathcal{E}}(\rho_{ABC})
    \ge
    \min_{|x\rangle}
    \left\{
    \mathcal{E}[\rho_{ABC}]- \mathcal{E}[\sigma_{|x\rangle} \otimes |0\rangle\langle 0|] \right\},
 \end{equation}
  where $|x\rangle$ is a measurement direction on the party $C$ and $\sigma_{|x\rangle} =
  \langle x|\rho_{ABC}|x\rangle / \tr[\langle x|\rho_{ABC}|x\rangle]$ 
  is a normalized state.
\end{observation}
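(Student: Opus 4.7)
The plan is to start from the simplified expression in Observation~\ref{ob:encdingextreme} and refine an arbitrary POVM element into rank-one operators, so that convexity of $\mathcal{E}$ can be applied. From Observation~\ref{ob:encdingextreme} one has $\Delta_{\mathcal{E}}(\rho_{ABC}) = \min_{M}\{\mathcal{E}[\rho_{ABC}] - \sum_i p_i \mathcal{E}[\sigma_i \otimes |0\rangle\langle 0|]\}$, where $M=\{m_i\}$ ranges over the relevant measurement set on $C$. It therefore suffices to prove that, for every such $M$, the bracketed quantity is bounded below by the right-hand side of Eq.~\eqref{ob:lowerbound}.

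The key step is a convex decomposition of each post-measurement state $\sigma_i$. I would spectrally decompose $m_i = \sum_k w_{i,k} |x_{i,k}\rangle\langle x_{i,k}|$ with $w_{i,k}\geq 0$ and unit vectors $|x_{i,k}\rangle$. Expanding $\sigma_i = \tr_C(\rho_{ABC} m_i)/p_i$ then yields $\sigma_i = \sum_k q_{i,k}\, \sigma_{|x_{i,k}\rangle}$ with weights $q_{i,k} = w_{i,k} \langle x_{i,k}|\rho_C|x_{i,k}\rangle / p_i$. Since $\sum_k w_{i,k} \langle x_{i,k}|\rho_C|x_{i,k}\rangle = \tr(\rho_C m_i) = p_i$, the $q_{i,k}$ form a probability distribution in $k$ for each $i$, so $\sigma_i$ is a proper convex mixture of the rank-one-direction states $\sigma_{|x_{i,k}\rangle}$. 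Convexity of $\mathcal{E}$ then gives $\mathcal{E}[\sigma_i \otimes |0\rangle\langle 0|] \leq \sum_k q_{i,k}\, \mathcal{E}[\sigma_{|x_{i,k}\rangle} \otimes |0\rangle\langle 0|]$.

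Averaging over $i$ with the probabilities $p_i$, I would note that $\{p_i q_{i,k}\}_{i,k}$ is itself a probability distribution indexing a family of rank-one directions on $C$, so the resulting convex average is bounded by its pointwise maximum: $\sum_i p_i\, \mathcal{E}[\sigma_i \otimes |0\rangle\langle 0|] \leq \max_{|x\rangle} \mathcal{E}[\sigma_{|x\rangle} \otimes |0\rangle\langle 0|]$. Subtracting this from $\mathcal{E}[\rho_{ABC}]$ reproduces exactly $\min_{|x\rangle}\{\mathcal{E}[\rho_{ABC}] - \mathcal{E}[\sigma_{|x\rangle} \otimes |0\rangle\langle 0|]\}$, and since the inequality holds for every $M$, passing to the minimum establishes Eq.~\eqref{ob:lowerbound}. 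The main obstacle is essentially bookkeeping: one has to verify that $\sigma_i$ indeed decomposes as a convex combination of rank-one-direction states with correctly normalized weights $q_{i,k}$. Once this is in hand, the bound follows from convexity of $\mathcal{E}$ together with the elementary inequality between a convex average and its supremum.
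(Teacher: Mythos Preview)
Your proposal is correct and follows essentially the same idea as the paper: reduce to rank-one measurement directions and then use convexity of $\mathcal{E}$ together with the trivial bound of a convex average by its maximum. The only cosmetic difference is that you first invoke Observation~\ref{ob:encdingextreme} and then spectrally decompose each $m_i$, whereas the paper bypasses Observation~\ref{ob:encdingextreme} by directly writing an arbitrary entanglement-breaking channel in its rank-one Holevo form $\Phi_C(\rho_{ABC})=\sum_i q_i p_i\,\sigma_{|x_i\rangle}\otimes|\psi_i\rangle\langle\psi_i|$ and applying convexity once; both routes land on the same inequality.
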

The proof of Observation~\ref{ob:lower} is given in Sec.~D in the Supplemental Material~\cite{supplementalmaterial}. This lower bound can be used to
characterize the complete entanglement loss, 
as we will see later in Sec.~\ref{sec:complete}.

Furthermore, suppose that we remove all the classical information of the measurement 
outcomes, that is, we encode all the measurement outcomes into the same state 
$|0\rangle$. Then we find an upper bound:
\begin{align}
    \Delta_{\mathcal{E}}(\rho_{ABC}) \le \tilde{\Delta}_{\mathcal{E}}(\rho_{ABC}),
\end{align}
for any convex entanglement measure $\mathcal{E}$, where
\begin{equation}\label{eq:uppberbound}
    \tilde{\Delta}_{\mathcal{E}}(\rho_{ABC}) = \mathcal{E}[\rho_{ABC}] - \mathcal{E}[\rho_{AB}\otimes |0\rangle\langle 0|],
\end{equation}
with $\rho_{AB} = \tr_C(\rho_{ABC})$. {We remark that $\tilde{\Delta}_{\mathcal{E}}(\rho_{ABC})$ is the maximal entanglement change, since we can always map any encoding into the state $|0\rangle\langle 0|$ with a local operation on the system $C$.}

{Let us compare $\Delta_{\mathcal{E}}$ with its lower and upper bounds using the tripartite 
negativity $N_{ABC}$. Figs.~\ref{fig:lowerboundpure} and \ref{fig:lowerboundmixed} illustrate 
the cases of the pure three-qubit state $\ket{\psi(p)}$ in Eq.~(\ref{eq:pureghzestate}) and the 
mixed three-qubit state $\rho(q) = q \rho_{\rm GHZ} + (1-q)\rho_{\rm W}$, where 
$\rho_{\rm GHZ}=\ket{\rm GHZ}\!\bra{\rm GHZ}$ and $\rho_{\rm W}=\ket{\rm W}\!\bra{\rm W}$.}
We find that the lower bound is {relatively close to $\Delta_{\mathcal{E}}$}, especially {if 
the state approximates the GHZ state.} The gap between $\Delta_{\mathcal{E}}$ and 
$\tilde{\Delta}_{\mathcal{E}}$ shows that the post-measurement information is 
more relevant for the GHZ state than for the W state.

%%%%%%%%%%%%%%%%%%%%%%%%%%%%%%%%%%%%%%%%%%%%%%%%%
\begin{figure}[tpb]
  \centering
  \includegraphics[width=0.45\textwidth]{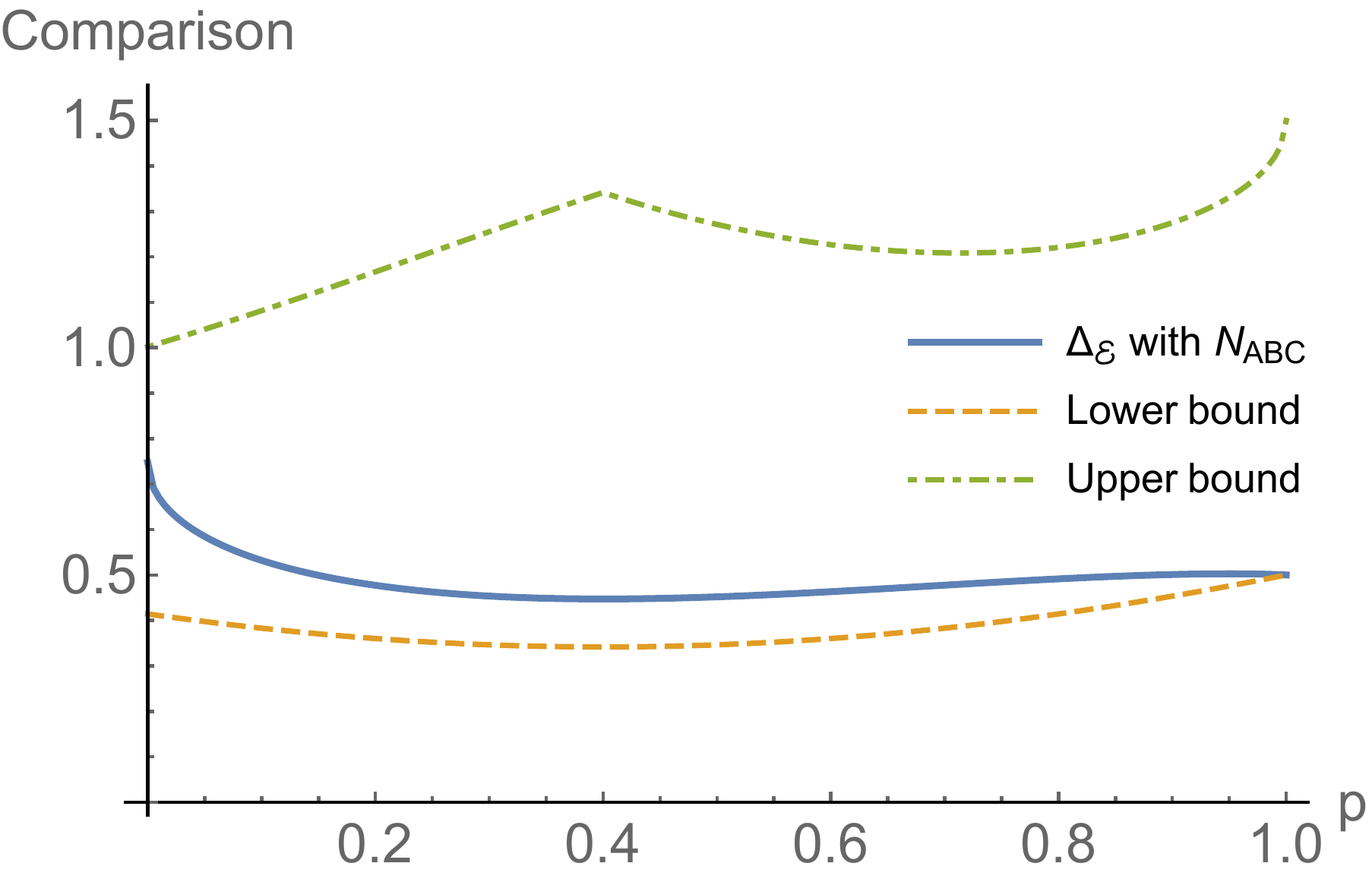}
  \caption{Comparison between $\Delta_{\mathcal{E}}$ with $N_{ABC}$ and its lower and upper bounds for the state $\ket{\psi(p)}=\sqrt{p} \ket{\rm GHZ} + \sqrt{1-p} \ket{\rm W}$.}
  \label{fig:lowerboundpure}
\end{figure}

\begin{figure}[tpb]
  \centering
  \includegraphics[width=0.45\textwidth]{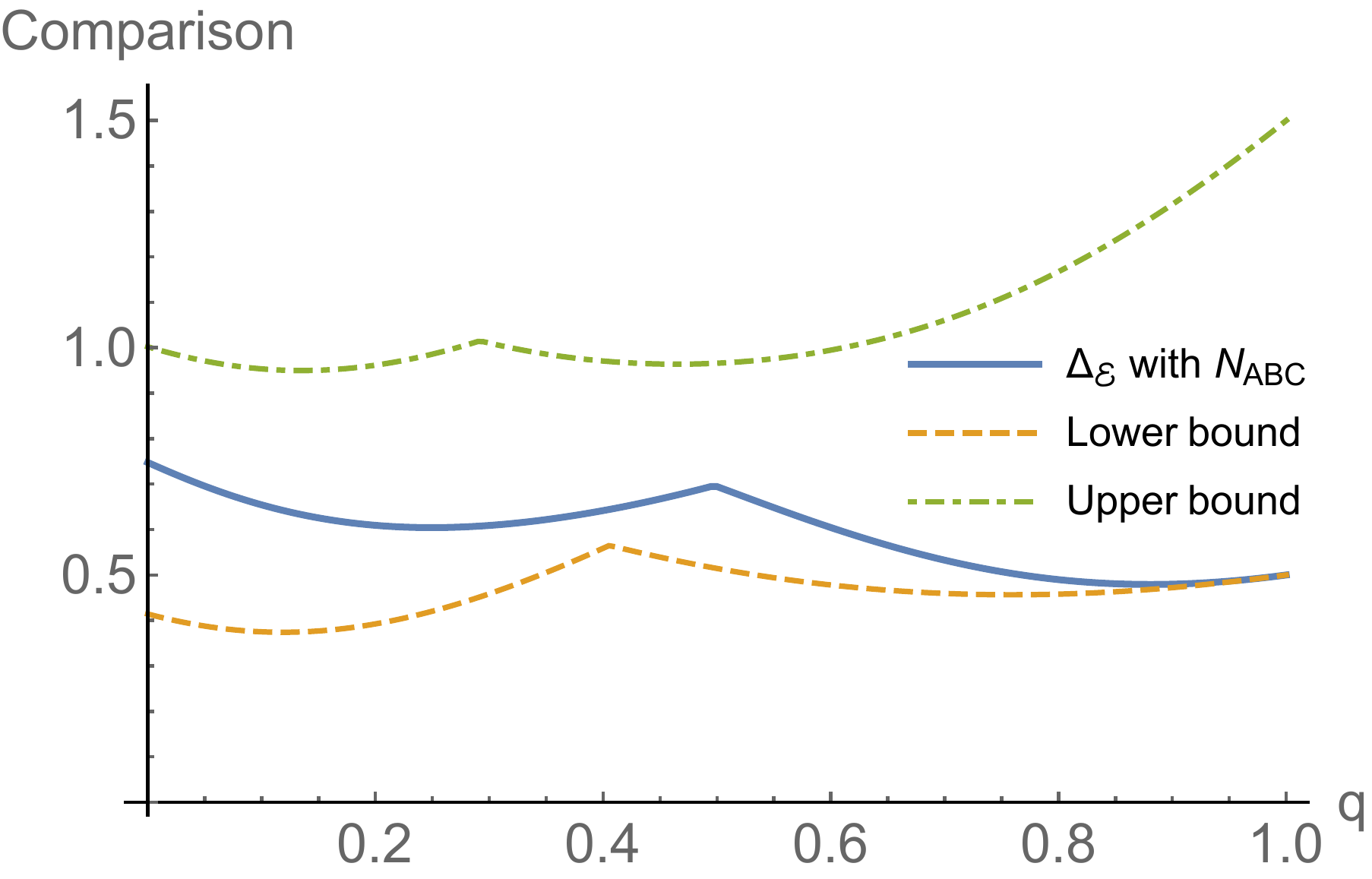}
  \caption{Comparison between $\Delta_{\mathcal{E}}$ with $N_{ABC}$ and its lower and upper bounds presented for the state $\rho(q) = q \rho_{\rm GHZ} + (1-q)\rho_{\rm W}$.}
  \label{fig:lowerboundmixed}
\end{figure}
%%%%%%%%%%%%%%%%%%%%%%%%%%%%%%%%%%%%%%%%%%%%%%%%%

Next, let us connect {entanglement change} to quantum discord.
For that, we consider the multipartite relative entropy of entanglement, which is the sum of the relative entropies of entanglement~\cite{linden1999reversibility} for all bipartitions, i.e., 
\begin{align}
     R_{ABC}(\rho_{ABC}) = R_{AB|C} + R_{BC|A} + R_{AC|B},
\end{align}
where $R_{X|Y} =\min_{\sigma\in {\rm SEP}} S(\rho_{XY}||\sigma)$ is the relative entropy of entanglement for a bipatition $X|Y$,
$S(\rho||\sigma)=\tr[\rho \, (\log{\rho}-\log{\sigma})]$ is the von Neumann relative entropy
and  ${\rm SEP}$ is the set of {bipartite} separable states.

Similarly, the amount of quantum discord~\cite{modi2012classical} can be also measured 
by the relative entropy: 
$D_{XY}(\rho_{XY}) = \min_{\rho'\in \Lambda}S(\rho_{XY}||\rho')$,
where $\Lambda$ is the set of quantum-classical states
$\rho'=\sum_{i} p_i \sigma_i \otimes \ket{i}\!\bra{i}$ with orthonormal basis $\{\ket{i}\}$.
Now we can formulate the following two Observations:

\begin{observation}\label{ob:bounds}
{For the entanglement measure $\mathcal{E}$ being } the tripartite relative entropy of entanglement $R_{ABC}$, we have
 \begin{align}
     R_{AB|C}(\rho_{ABC}) \leq \Delta_{\mathcal{E}}(\rho_{ABC})
     \leq 3 D_{AB|C}(\rho_{ABC}).
 \end{align}
\end{observation}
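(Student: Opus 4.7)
My plan is to establish the two bounds using different choices of $\Phi_C$. For the lower bound, I would show that for \emph{every} $\Phi_C\in\mathcal{N}_C$ the output state $\Phi_C(\rho_{ABC})$ is separable across the cut $AB|E$ by the very form of the operation in Eq.~\eqref{eq:operation}, so that $R_{AB|E}(\Phi_C(\rho_{ABC}))=0$. For the remaining two cuts, $\Phi_C$ acts as a local channel on the $BC$ side (respectively $AC$ side), so monotonicity of the relative entropy of entanglement under LOCC gives $R_{BE|A}(\Phi_C(\rho))\leq R_{BC|A}(\rho)$ and likewise for the other bipartition. Summing these three contributions produces $R_{ABC}(\Phi_C(\rho))\leq R_{ABC}(\rho)-R_{AB|C}(\rho)$; rearranging and taking the minimum over $\Phi_C$ delivers the lower bound.

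For the upper bound it suffices to exhibit a single $\Phi_C$ whose loss is at most $3D_{AB|C}(\rho_{ABC})$. I would take the projective measurement with perfect encoding whose basis $\{\ket{i}\}$ on $C$ achieves the minimum in $D_{AB|C}(\rho_{ABC})$; call this channel $\Pi$ and set $\rho'=\Pi(\rho_{ABC})$. A direct computation using the block-diagonal structure of $\rho'$ in the measurement basis yields $S(\rho\|\rho')=S(\rho')-S(\rho)=D_{AB|C}(\rho)$. The task then reduces to bounding each of the three bipartite differences $R_{X|Y}(\rho)-R_{X|Y}(\rho')$ by $D_{AB|C}(\rho)$. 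For the $AB|C$ cut this is immediate: $\rho'$ is quantum-classical, hence separable across $AB|C$, so $R_{AB|C}(\rho')=0$ and $R_{AB|C}(\rho)\leq S(\rho\|\rho')=D_{AB|C}(\rho)$.

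For the two remaining cuts, say $BC|A$, the plan is to let $\sigma'$ be the separable state across $A|BC$ that realizes $R_{BC|A}(\rho')$ and to argue that $\sigma'$ can be chosen $\Pi$-invariant. Since $\Pi$ is a local operation on the $BC$ side, $\Pi(\sigma')$ is again separable, and data processing combined with $\Pi(\rho')=\rho'$ gives $S(\rho'\|\Pi(\sigma'))\leq S(\rho'\|\sigma')$, so replacing $\sigma'$ by $\Pi(\sigma')$ does not change optimality. With $\sigma'$ block-diagonal in the measurement basis, so is $\log\sigma'$, and hence $\tr[(\rho-\rho')\log\sigma']=0$. Using $\sigma'$ as a (possibly sub-optimal) separable witness for $\rho$ now yields
\begin{equation*}
R_{BC|A}(\rho)-R_{BC|A}(\rho')\leq S(\rho\|\sigma')-S(\rho'\|\sigma')=S(\rho')-S(\rho)=D_{AB|C}(\rho),
\end{equation*}
and the identical argument handles $AC|B$. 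Summing the three bipartite contributions produces the factor of three. The main obstacle is precisely this last step: one must verify carefully that the optimal $\sigma'$ may be taken $\Pi$-invariant and that this invariance really collapses the cross term $\tr[(\rho-\rho')\log\sigma']$. Both facts ultimately follow from $\Pi$ being a completely dephasing map in a fixed basis on $C$, but the algebra has to be laid out with some care.
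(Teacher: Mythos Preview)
Your argument is correct and coincides with the paper's proof: the lower bound is handled identically, and for the upper bound the paper invokes Lemma~1 of Ref.~\cite{chuan2012quantum}---which is precisely the inequality $R_{BC|A}(\rho)-R_{BC|A}(\Pi(\rho))\leq D_{AB|C}(\rho)$---as a black box, whereas you supply its proof directly via the $\Pi$-invariance/block-diagonal trick. Your write-up is thus a self-contained version of the same strategy, and the ``obstacle'' you flag is exactly the content of that cited lemma.
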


\begin{observation}\label{ob:relative}
More generally, if $D_{AB|C}(\rho_{ABC}) = 0$, then we have  $\Delta_{\mathcal{E}}(\rho_{ABC}) = 0$ for any entanglement measure $\mathcal{E}$.
\end{observation}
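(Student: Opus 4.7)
The strategy is to observe that $D_{AB|C}(\rho_{ABC})=0$ already forces $\rho_{ABC}$ to be a quantum--classical state with the classical part on $C$, and then to exhibit an explicit $\Phi_C\in\mathcal{N}_C$ whose output coincides with $\rho_{ABC}$ up to the local relabeling $C\leftrightarrow E$. Since the two states are related by a local isomorphism of the last party, any LOCC-monotone entanglement measure assigns them the same value, and so the minimum defining $\Delta_{\mathcal{E}}$ is saturated at zero.

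First, I would use the fact that the quantum relative entropy vanishes only on coincident arguments. From the definition $D_{AB|C}(\rho_{ABC})=\min_{\rho'\in\Lambda}S(\rho_{ABC}\|\rho')=0$, the state $\rho_{ABC}$ must itself lie in $\Lambda$, so there exist an orthonormal basis $\{|i\rangle\}$ of $\mathcal{H}_C$, probabilities $p_i$, and states $\sigma_i$ on $\mathcal{H}_A\otimes\mathcal{H}_B$ with
\begin{align}
\rho_{ABC}=\sum_i p_i\,\sigma_i\otimes|i\rangle\!\langle i|_C.
\end{align}
This structural characterization is the only nontrivial input from the theory of quantum discord and will be the main (but mild) obstacle; everything that follows is a direct computation.

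Next, I would take the candidate operation $\Phi_C$ determined by the projective measurement $M=\{m_i=|i\rangle\!\langle i|\}$ on $C$ together with the perfect encoding $\tau_i=|i\rangle\!\langle i|$ in a register $E$ with $\dim\mathcal{H}_E=\dim\mathcal{H}_C$. Inserting this choice into Eq.~\eqref{eq:operation} and using $p_i=\tr(\rho_{ABC}m_i)$ and $\sigma_i=\tr_C(\rho_{ABC}m_i)/p_i$ gives
\begin{align}
\Phi_C(\rho_{ABC})=\sum_i p_i\,\sigma_i\otimes|i\rangle\!\langle i|_E,
\end{align}
which is identical to $\rho_{ABC}$ under the isomorphism $\mathcal{H}_C\simeq\mathcal{H}_E$. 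Since any entanglement measure monotone under LOCC is in particular invariant under local unitary isomorphisms within a single party, $\mathcal{E}[\Phi_C(\rho_{ABC})]=\mathcal{E}[\rho_{ABC}]$. Combined with $\Delta_{\mathcal{E}}(\rho_{ABC})\geq 0$, which also follows from LOCC monotonicity, this forces $\Delta_{\mathcal{E}}(\rho_{ABC})=0$, as claimed.
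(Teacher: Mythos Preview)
Your proof is correct and follows essentially the same idea as the paper's: both arguments exhibit an entanglement-breaking channel $\Phi_C$ that fixes $\rho_{ABC}$, which immediately gives $\Delta_{\mathcal{E}}\le 0$ and hence $\Delta_{\mathcal{E}}=0$ by LOCC monotonicity. The only difference is packaging: the paper invokes an external characterization (that $D_{AB|C}(\rho_{ABC})=0$ iff some entanglement-breaking $\Phi_C$ satisfies $\Phi_C(\rho_{ABC})=\rho_{ABC}$), whereas you unpack this directly by first deducing the quantum--classical form of $\rho_{ABC}$ from $S(\rho\|\sigma)=0\Rightarrow\rho=\sigma$ and then writing down the measure-and-prepare channel in the distinguished basis. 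Your route is thus slightly more self-contained, but the logical content is the same.
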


{The proofs of Observation~\ref{ob:bounds} and Observation~\ref{ob:relative} are given in Sec.~E and Sec.~F in the Supplemental Material~\cite{supplementalmaterial}.}
From Observation~\ref{ob:relative}, the condition $D_{AB|C}(\rho_{ABC}) = 0$ is a sufficient condition for $\Delta_{\mathcal{E}}(\rho_{ABC}) = 0$ for any measure $\mathcal{E}$.
On the other hand, this is not a necessary condition.
For instance, if the initial state $\rho_{ABC}$ is fully separable, clearly $\Delta_{\mathcal{E}}(\rho_{ABC}) = 0$, but this does not mean $D_{AB|C}(\rho_{ABC})=0$.
From the conceptional perspective, quantum discord is the difference of quantum correlation before and after a projective measurement, whereas $\Delta_{\mathcal{E}}(\rho_{ABC})$ {quantifies} the difference of entanglement, which is only one sort of quantum correlations.

%%%%%%%%%%%%%%%%%%%%%%%%%%%%%%%%%%%%%%%%%%%%%%%%%%%%%%%%%%%%%%%%%%%%%%%%%%%%%%%%%%%%%%%
\section{Lockability}\label{sec:lockability}
Previous works~\cite{horodecki2005locking,christandl2005uncertainty,leung2009survey} 
have studied a similar issue under the name of lockability of entanglement measures.  
There, one asks for the quantitative change of entanglement by the loss of one particle, 
(e.g., one qubit) {\it within} one party. For example, in the bipartite scenario, one
considers the situation where Alice and Bob have both five qubits and then one asks
how the entanglement changes if Alice looses one of her qubits. If the entanglement 
change can be arbitrarily large, the entanglement measure is {called} lockable.
For instance, all convex entanglement measures are known to be lockable, while 
the relative entropy of entanglement is not~\cite{horodecki2005locking}.

{The lockable entanglement is related to our consideration in the following sense. 
For a given triparite state $\rho_{ABC}$, if we choose the convex entanglement 
measure $\mathcal{E}$ to only measure the entanglement between the bipartition 
$A|BC$ (or $AC|B$), then $\tilde{\Delta}_{\mathcal{E}}$ defined in 
Eq.~\eqref{eq:uppberbound} is the quantity considered in lockable entanglement.
More precisely, for any convex entanglement measure $\mathcal{E}$ for the 
bipartition $A|BC$, we have
\begin{equation}
    \tilde{\Delta}_{\mathcal{E}}(\rho_{ABC}) = \mathcal{E}[\rho_{ABC}] - 
    \mathcal{E}[\rho_{AB}],
\end{equation}
where we used that $\mathcal{E}[\rho_{AB}\otimes \ket{0}\!\bra{0}]=\mathcal{E}[\rho_{AB}]$, 
see Theorem 2 in Ref.~\cite{horodecki2005simplifying}.}

In order to understand the difference between the behaviour of entanglement under classicalization and the lockability problem, one has to analyze the role of the information
coming from the measurement results. We know already from Fig.~\ref{fig:lowerboundpure} and \ref{fig:lowerboundmixed} that this information makes some difference for the entanglement change. In the following, we will show that this difference can be arbitrarily large.

%%%%%%%%%%%%%%%%%%%%%%%%%%%%%%%%%%%%%%%%%%%%%%%%%%%%%%%%%%%%%%%%%%%%%%%%%%%%%%%%%%%%%%%
\subsection{Example: Flower state}
{First, let us consider} the so-called flower state on $d \otimes d\otimes 2$-dimensional systems~\cite{christandl2005uncertainty}:
  \begin{align}
      \omega_{ABC}
      = &\frac{2}{d(d+1)}  P^{(+)}_{AB}\otimes \frac{d+1}{2d} |0\rangle\langle 0|_C\nonumber\\
      \quad
      &+  \frac{2}{d(d-1)} P^{(-)}_{AB}\otimes \frac{d-1}{2d} |1\rangle\langle 1|_C,
  \end{align}
  where $P^{(\pm)}_{AB}$ are the projections onto the symmetric and anti-symmetric 
  subspaces, that is
  $P^{(\pm)}_{AB} = (\id_{AB}\pm V_{AB})/2$ with the SWAP operator $V_{AB}$, acting as $V_{AB}\ket{v_A}\otimes\ket{v_B}=\ket{v_B}\otimes\ket{v_A}$.
  
Notice that,
{the quantum discord of $\omega_{ABC}$ for the bipartition $AB|C$ is $0$,}
i.e., $D_{AB|C}(\omega_{ABC})=0$.
From Observation~\ref{ob:relative}, we conclude that $\Delta_{\mathcal{E}}(\omega_{ABC}) = 0$ for any entanglement measure $\mathcal{E}$.
However, we have $\tilde{\Delta}_{\mathcal{E}}(\omega_{ABC}) = \mathcal{E}(\omega_{ABC}) > 0$ 
{, because $\tr_C(\omega_{ABC}) \otimes |0\rangle\langle 0|$ is fully separable. 
} 
In fact, if the entanglement measure $\mathcal{E}$ is taken as the squashed entanglement, then $\mathcal{E}(\omega_{ABC})$ can be arbitrarily large~\cite{christandl2005uncertainty}. This directly implies that the difference $\tilde{\Delta}_{\mathcal{E}} - {\Delta}_{\mathcal{E}}$ can be arbitrarily large {by choosing $d$ properly}.
Hence, although the  information from the measurement {at the flower state} is only one bit, 
a large amount of entanglement can be saved by collecting it.
%%%%%%%%%%%%%%%%%%%%%%%%%%%%%%%%%%%%%%%%%%%%%%%%%%%%%%%%%%%%%%%%%%%%%%%%%%%%%%%%%%%%%%%

\subsection{Example: $n$-pairs of Bell states}
{On the other hand, we will see that the entanglement change $\Delta_{\mathcal{E}}$ 
can also be arbitrarily large even if only one qubit has become classical.}
As example, let us consider a pure state made of $n$ pairs of Bell state $\ket{\Psi^+}=(\ket{00}+\ket{11})/\sqrt{2}$.
We label the $i$-th pair of particles with $a_i, b_i$.
Suppose that the party $A$ owns the particles $\{a_i\}_{i=1}^n$, the party $B$ owns the particles $\{b_i\}_{i=1}^{n-1}$, and the party $C$ owns the particle $b_n$.
We denote this state as $\beta_{ABC}=|{\Psi^+}\rangle\langle {\Psi^+}|^{\otimes n}$.
Now we can present the following {observation which is proven in Sec.~G in the Supplemental Material~\cite{supplementalmaterial}.}
\begin{observation} \label{ob:lockneg}
  For {the entanglement measure $\mathcal{E}$ to be} the tripartite negativity $N_{ABC}$, we have
  \begin{align}
    \Delta_{\mathcal{E}}(\beta_{ABC}) = 2^{n-2} + 1/2.
\end{align}
Thus, $\Delta_{\mathcal{E}}(\beta_{ABC})$ can be arbitrary large {by choosing $n$ properly}.
\end{observation}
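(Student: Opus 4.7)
The plan is to compute both quantities on the right-hand side of Observation~\ref{ob:encdingextreme} explicitly, exploiting the tensor-product structure of $\beta_{ABC}$. First I would evaluate $N_{ABC}(\beta_{ABC})$ directly. Since $\beta_{ABC}$ is pure, on each bipartition the Schmidt coefficients are determined by the Bell pairs straddling the cut: $AB|C$ has only pair $n$ across it (Schmidt rank $2$), $AC|B$ has the first $n-1$ pairs across it because pair $n$ lies entirely in $AC$ (Schmidt rank $2^{n-1}$), and $BC|A$ has all $n$ pairs across it (Schmidt rank $2^n$). The bipartite negativity of a maximally entangled state of Schmidt rank $r$ is $(r-1)/2$, so summing the three contributions gives $N_{ABC}(\beta_{ABC}) = \tfrac{1}{2} + \tfrac{2^{n-1}-1}{2} + \tfrac{2^n-1}{2} = 2^{n-1}+2^{n-2}-\tfrac{1}{2}$.

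Second, I would invoke Observation~\ref{ob:encdingextreme} to reduce the minimization over $\mathcal{N}_C$ to extremal POVMs on the qubit $b_n$, whose non-trivial elements are rank one~\cite{d2005classical}, of the form $m_i = q_i |x_i\rangle\!\langle x_i|$. Using $\langle x|\Psi^+\rangle = |\bar x\rangle/\sqrt{2}$, a direct contraction yields $\sigma_i = |\Psi^+\rangle\!\langle\Psi^+|^{\otimes n-1}\otimes |\bar x_i\rangle\!\langle\bar x_i|_{a_n}$, which is pure and has its extra factor sitting entirely on party $A$. Therefore, in $N_{ABC}[\sigma_i\otimes|0\rangle\!\langle 0|]$ the bipartition $AB|C$ contributes $0$ (tensor with a pure product on $C$), while each of the other two bipartitions reduces to $N_{A|B}(|\Psi^+\rangle^{\otimes n-1})=(2^{n-1}-1)/2$, since tensoring with pure states on $A$'s or $C$'s side does not alter the Schmidt spectrum across $A|B$. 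Hence $N_{ABC}[\sigma_i\otimes|0\rangle\!\langle 0|]=2^{n-1}-1$, a value that is strikingly independent of the outcome index and of the chosen extremal POVM. Summing with $\sum_i p_i = 1$ and subtracting gives $\Delta_\mathcal{E}(\beta_{ABC}) = 2^{n-2}+\tfrac{1}{2}$.

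Thus in this problem the ``optimization'' is in fact trivial: every extremal POVM attains the same value, thanks to the $|x\rangle\mapsto|\bar x\rangle$ symmetry of $|\Psi^+\rangle$, which keeps $\sigma_i$ pure regardless of the measurement direction on $b_n$. The main point that demands care is checking that $N_{A|B}$ is invariant under appending a local pure state on either side; if one also wishes to cover the rank-two POVM $\{I\}$, the identity $(\rho\otimes I/2)^{T_B} = \rho^{T_B}\otimes I/2$ shows that its negative eigenvalues rescale with preserved total, so the same post-classicalization value $2^{n-1}-1$ re-emerges. Once these bookkeeping steps are in place, the stated formula follows immediately, and it clearly grows without bound in $n$.
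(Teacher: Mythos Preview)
Your argument is correct, and it reaches the same numerical breakdown as the paper, but by a cleaner route. The paper does not invoke Observation~\ref{ob:encdingextreme}; instead it works with an arbitrary entanglement-breaking channel $\Phi_C$ and handles each bipartition separately. For the cut $BC|A$ it sandwiches $N_{BC|A}[\Phi_C(\beta_{ABC})]$ between an upper bound (convexity plus the generic estimate $N\le (d-1)/2$ for $d\times d$ states) and a lower bound (LOCC monotonicity together with $\tr_C\circ\Phi_C=\tr_C$), forcing both to equal $(2^{n-1}-1)/2$; the cuts $B|CA$ and $AB|C$ are then treated by direct ancilla-invariance and the entanglement-breaking property, respectively. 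You instead use Observation~\ref{ob:encdingextreme} to pass to the exact expression $\sum_i p_i\,\mathcal{E}[\sigma_i\otimes|0\rangle\langle 0|]$ over extremal POVMs and then exploit the Bell-pair identity $\langle x|\Psi^+\rangle\propto|\bar x\rangle$ to see that every rank-one outcome yields the \emph{same} pure $\sigma_i$ up to a local pure factor on $A$; hence the sum is constant and the minimisation collapses. This buys you a shorter argument with no sandwiching, at the price of depending on Observation~\ref{ob:encdingextreme}. One small tightening: rather than treating the trivial POVM $\{I\}$ separately, you can simply note that for any (not necessarily extremal) element $m_i$, spectral decomposition and convexity of $N_{ABC}$ give $\mathcal{E}[\sigma_i\otimes|0\rangle\langle 0|]\le 2^{n-1}-1$, so the rank-one computation already yields the maximum over all POVMs without invoking the detailed structure of extremal qubit measurements.
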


{
Inspired by those two examples, an interesting question arises whether there exist entanglement measures $\mathcal{E}$ and states $\rho_{ABC}$ such that both $\Delta_{\mathcal{E}}(\rho_{ABC})$ and $\tilde{\Delta}_{\mathcal{E}}(\rho_{ABC}) - {\Delta}_{\mathcal{E}}(\rho_{ABC})$ can be arbitrarily large {in the sense that they are not limited by the size of $C$}, even if $C$ is only a qubit. We leave this question for further research.}

%%%%%%%%%%%%%%%%%%%%%%%%%%%%%%%%%%%%%%%%%%%%%%%%%%%%%%%%%%%%%%%%%%%%%%%%%%%%%%%%%%%%%%%
\section{Complete entanglement loss under classicalization}\label{sec:complete}
By definition, $\Delta_{\mathcal{E}}(\rho_{ABC}) \le \mathcal{E}[\rho_{ABC}]$ {always holds.}
We are now concerned about the case where this inequality is saturated, i.e., $\Delta_{\mathcal{E}}(\rho_{ABC}) =\mathcal{E}[\rho_{ABC}]$, or equivalently, 
$\max_{\Phi_C\in \mathcal{N}_C}\mathcal{E}[\Phi_C(\rho_{ABC})]=0$.

First of all, Observation~\ref{ob:lower} implies a sufficient condition for 
complete entanglement loss under classicalization, {which can be formulated as 
follows.}

\begin{condition}\label{cond1}
If, after a projective measurement in any direction $\ket{x}$ on $C$, the post-measurement state $\sigma_{\ket{x}} \propto \langle x|\rho_{ABC}|x\rangle$ is always separable, 
then the entanglement is completely lost under classicalization.
\end{condition}

Clearly, Condition \ref{cond1} is stronger than the condition that the reduced state 
$\rho_{AB}$ is separable. For instance, let us consider the GHZ state.
Its reduced state $\tr_C[\rho_{\text{GHZ}}]$ is separable, but its 
post-measurement state $\sigma_{\ket{x}}$ can be entangled if measurement 
bases are $\{\ket{+}, \ket{-}\}$.

The existence of genuine multipartite entangled states which satisfy 
Condition~\ref{cond1}, however, has already been reported in Ref.~\cite{miklin2016multiparticle}.
{We will propose observations using Condition \ref{cond1} and provide more examples in in Sec.~H and Sec.~I in the Supplemental Material~\cite{supplementalmaterial}.}

%%%%%%%%%%%%%%%%%%%%%%%%%%%%%%%%%%%%%%%%%%%%%%%%%%%%%%%%%%%%%%%%%%%%%%%%%%%%%%%%%%%%%%%

\section{Conclusion and discussion}\label{sec:conclusion}
Multiparticle quantum entanglement is an important quantum 
resource and the preservation of entanglement is 
a practical {issue}. We have studied the change of 
multiparticle entanglement under classicalization of one particle. 
Clearly, the results usually depend on the choice of the entanglement 
quantifier, and the change of entanglement is difficult to compute. We 
provided simplifications for important special scenarios and  upper 
and lower bounds for the general case. One crucial question is whether 
one small part {like one qubit} can change a lot quantum resources 
like quantum entanglement or not. Our results show that the entanglement 
change can be still arbitrarily large even with complete measurement 
information left. Besides, the measurement information can also make 
an arbitrary large difference. Finally,  we provide conditions under which
quantum entanglement is always completely lost under classicalization.

While we focused on the difference of original quantum resource and the remaining
resource if one party becomes classical, the behaviour of quantum resources 
{\it during} the quantum to classical transition is also interesting, and it may 
have a richer theoretical structure. We believe that our work paves a way to 
the design of concepts for quantum resource storage and may help to develop 
a novel direction in the field of quantum resource theories. 
%%%%%%%%%%%%%%%%%%%%%%%%%%%%%%%%%%%%%%%%%%%%%%%%%%%%%%%%%%%%%%%%%%%%%%%%%%%%%%%%%%%%%%%
\begin{acknowledgments}
We thank
H. Chau Nguyen,
Martin Pl\'{a}vala,  
Benjamin Yadin,
and
Xiao-Dong Yu
for discussions. 
This work was supported by the
Deutsche Forschungsgemeinschaft
(DFG, German Research Foundation, 
project numbers 447948357 and 440958198), 
the Sino-German Center for Research Promotion 
(Project M-0294), the ERC (Consolidator Grant No. 683107/TempoQ),
the German Ministry of Education and Research 
(Project QuKuK, BMBF Grant No. 16KIS1618K),
the Humboldt foundation, and the DAAD.

\end{acknowledgments}

\onecolumngrid
\renewcommand\thesection{\Alph{section}}

%%%%%%%%%%%%%%%%%%%%%%%%%%%%%%%%%%%%%%%%%%%%%%%%%%%%
\addtocounter{theorem}{-7}
\addtocounter{section}{-7}
\section{Proof of Observation~\ref{ob:encdingextreme2}}\label{ap:a}
\begin{observation}\label{ob:encdingextreme2}
  If the entanglement measure $\mathcal{E}$ is convex, we only need to consider $M = \{m_i\}$ as an extremal point in the considered measurement set $\mathcal{M}$. 
  More precisely:
\begin{align}
  \hspace{-0.5em}\Delta_{\mathcal{E}}(\rho_{ABC})\! =\! \min_{M\in \partial \mathcal{M}} \left\{\mathcal{E}[\rho_{ABC}] \!-\! \sum_i p_i \mathcal{E}[\sigma_i\!\otimes\! |0\rangle\langle 0|]\right\},  
\end{align}
where $\partial \mathcal{M}$ is the set of extremal points in $\mathcal{M}$, $p_i = \tr(\rho_{ABC} m_i)$ and $\sigma_i = \tr_C(\rho_{ABC} m_i)/p_i$.
\end{observation}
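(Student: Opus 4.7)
The plan is to establish the equivalent maximization form
$\max_{\Phi_C\in\mathcal{N}_C} \mathcal{E}[\Phi_C(\rho_{ABC})] = \max_{M\in\partial \mathcal{M}} \sum_i p_i \mathcal{E}[\sigma_i \otimes \ket{0}\!\bra{0}]$
via two reductions: first, for a fixed measurement $M$, show that the optimal encoding is the perfect one $\tau_i=\ket{i}\!\bra{i}$; second, show that the optimal measurement is extremal.

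For the first reduction, for any encoding $\{\tau_i\}$ the output is $\Phi_C(\rho) = \sum_i p_i \sigma_i \otimes \tau_i$, while the perfect-encoding output is $\Phi_C^{\rm perf}(\rho) = \sum_i p_i \sigma_i \otimes \ket{i}\!\bra{i}$. Since the CPTP map $\ket{i}\!\bra{i}\mapsto \tau_i$ acts only on the register $E$, it is local and LOCC monotonicity of $\mathcal{E}$ gives $\mathcal{E}[\Phi_C(\rho)]\le \mathcal{E}[\Phi_C^{\rm perf}(\rho)]$. I would then invoke the ``flag-basis'' identity $\mathcal{E}[\sum_i p_i \sigma_i \otimes \ket{i}\!\bra{i}] = \sum_i p_i \mathcal{E}[\sigma_i \otimes \ket{0}\!\bra{0}]$: the upper bound follows from convexity of $\mathcal{E}$ together with local-unitary invariance $\ket{i}\to\ket{0}$ applied in each summand, while the lower bound follows because a projective measurement on $E$ in the basis $\{\ket{i}\}$ reveals the outcome $i$ deterministically with probability $p_i$, so strong LOCC monotonicity of $\mathcal{E}$ yields $\sum_i p_i \mathcal{E}[\sigma_i \otimes \ket{0}\!\bra{0}] \le \mathcal{E}[\Phi_C^{\rm perf}(\rho)]$. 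Hence, for fixed $M$, $\sup_{\tau_i}\mathcal{E}[\Phi_C(\rho)] = F(M):=\sum_i p_i \mathcal{E}[\sigma_i \otimes \ket{0}\!\bra{0}]$.

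For the second reduction, I would show that $F$ is convex on the measurement polytope $\mathcal{M}$. If $M = \lambda M^{(1)} + (1-\lambda) M^{(2)}$ with $m_i = \lambda m_i^{(1)} + (1-\lambda) m_i^{(2)}$ (after placing both measurements into a common label set by padding with zero operators), then $p_i\sigma_i = \lambda p_i^{(1)}\sigma_i^{(1)} + (1-\lambda) p_i^{(2)}\sigma_i^{(2)}$, so each $\sigma_i$ is a convex combination of $\sigma_i^{(1)}$ and $\sigma_i^{(2)}$ with weights $\lambda p_i^{(1)}/p_i$ and $(1-\lambda) p_i^{(2)}/p_i$. Multiplying the convexity inequality for $\mathcal{E}$ applied to $\sigma_i \otimes \ket{0}\!\bra{0}$ by $p_i$ and summing over $i$ gives $F(M)\le \lambda F(M^{(1)})+(1-\lambda) F(M^{(2)})$. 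Since a convex function on a compact convex set attains its maximum on an extreme point (Bauer's maximum principle), one concludes $\max_M F(M)=\max_{M\in\partial\mathcal{M}} F(M)$; combined with the first reduction this yields the statement.

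The main technical obstacle is the reverse inequality of the flag-basis identity, which rests on the \emph{strong} form of LOCC monotonicity (average monotonicity under selective measurements) rather than the weaker form demanded in Sec.~\ref{sec:notations}. This property holds for all standard entanglement measures considered in the paper (tripartite negativity, squashed entanglement, relative entropy of entanglement), but should be made explicit in the hypothesis. A minor additional subtlety is the identification of measurements with different numbers of outcomes when forming convex combinations, which is handled by padding with zero operators without affecting either the value of $F$ or the characterization of extremal POVMs given in Ref.~\cite{d2005classical}.
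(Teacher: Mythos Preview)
Your proof is correct and follows essentially the same approach as the paper: reduce to perfect encoding via LOCC monotonicity, invoke the flag identity to split the convex combination, and reduce to extremal POVMs via convexity of $\mathcal{E}$; the only cosmetic differences are that the paper performs the extremal-POVM reduction first (by decomposing $\Phi_C$ directly rather than proving convexity of your functional $F$) and obtains the flag identity by citing Theorem~2 of Ref.~\cite{horodecki2005simplifying} rather than rederiving it. Your explicit remark that the reverse inequality in the flag identity requires the \emph{strong} (selective) form of LOCC monotonicity is well taken---the paper absorbs this hypothesis into the cited flag condition without stating it separately.
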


\begin{proof}
For any entanglement-breaking channel $\Phi_C$, we have the decomposition:
\begin{align}\label{eq:operation-appendix}
    \Phi_C(\rho_{ABC})= \sum_i p_i \sigma_i \otimes \tau_i,
\end{align}
where $M = \{m_i\}$ is a measurement acting on $C$, $p_i = \tr(\rho_{ABC} m_i)$, $\sigma_i = \tr_C(\rho_{ABC} m_i)/p_i$, and $\tau_i$ is the state encoding the measurement outcome $i$.

Since the set $\mathcal{M}$ of all POVMs acting on $C$ is convex, any POVM $M = \{m_i\}$ can be decomposed into the convex combinations of extreme points of $\mathcal{M}$. That is, we have 
\begin{align}
    m_i = \sum_k c_k m_i^{(k)}, \forall i,
\end{align}
where $M^{(k)} = \{m_i^{(k)}\}$ is an extreme point in the set $\mathcal{M}$
and $0<c_k \leq 1$ with $\sum_k c_k = 1$.
Consequently, the operation $\Phi_C$ can be rewritten as
\begin{align}
    \Phi_C(\rho_{ABC}) = \sum_k c_k \Phi_C^{(k)}(\rho_{ABC}),
\end{align}
where 
\begin{equation}
     \Phi_C^{(k)}(\rho_{ABC}) = \sum_i \tr_C\left(\rho_{ABC} m_i^{(k)}\right) \otimes \tau_i.
\end{equation}
In the case that the entanglement measure $\mathcal{E}$ is convex, we have
\begin{align}
    \mathcal{E}[\Phi_C(\rho_{ABC})]
    \le \sum_k c_k \mathcal{E}\left[\Phi_C^{(k)}(\rho_{ABC})\right]
    \le \max_k \mathcal{E}\left[\Phi_C^{(k)}(\rho_{ABC})\right].
\end{align}
This implies that the maximal value of $\mathcal{E}[\Phi_C(\rho_{ABC})]$, or equivalently, the value of $\Delta_{\mathcal{E}}(\rho_{ABC})$, can always be achieved by extreme POVMs. 
That is,
\begin{equation}\label{eq:step1}
    \max_{\Phi_C\in \mathcal{N}_C} \mathcal{E}\left[\Phi_C(\rho_{ABC})\right] = \max_{M\in \partial\mathcal{M}, {\{\tau_i\}}} \mathcal{E}\left(\sum_i p_i \sigma_i \otimes \tau_i\right),
\end{equation}
where $\partial\mathcal{M}$ is the set of all extreme POVMs.

Note that, any imperfect encoding can be generated from the perfect one by local operations. Since the entanglement measure $\mathcal{E}$ is LOCC monotonic, {
we have $\mathcal{E}\left(\sum_i p_i \sigma_i \otimes \tau_i\right) \le \mathcal{E}\left(\sum_i p_i \sigma_i \otimes |i\rangle\langle i|_C\right)$.
}
This implies that, 
\begin{equation}\label{eq:step21}
    \max_{M\in \partial\mathcal{M}, {\{\tau_i\}}} \mathcal{E}\left(\sum_i p_i \sigma_i \otimes \tau_i\right) \le \max_{M\in \partial\mathcal{M}} \mathcal{E}\left(\sum_i p_i \sigma_i \otimes |i\rangle\langle i|_C\right).
\end{equation}
Since $\{\tau_i = |i\rangle\langle i|\}$ is just a special encoding, we have
\begin{equation}\label{eq:step22}
    \max_{M\in \partial\mathcal{M}, {\{\tau_i\}}} \mathcal{E}\left(\sum_i p_i \sigma_i \otimes \tau_i\right) \ge \max_{M\in \partial\mathcal{M}} \mathcal{E}\left(\sum_i p_i \sigma_i \otimes |i\rangle\langle i|_C\right).
\end{equation}
In total, we know that
\begin{equation}\label{eq:step2}
    \max_{M\in \partial\mathcal{M}, {\{\tau_i\}}} \mathcal{E}\left(\sum_i p_i \sigma_i \otimes \tau_i\right) = \max_{M\in \partial\mathcal{M}} \mathcal{E}\left(\sum_i p_i \sigma_i \otimes |i\rangle\langle i|_C\right).
\end{equation}
Besides, we have
\begin{align}\label{eq:step3}
    \mathcal{E}\left(\sum_i p_i \sigma_i \otimes \ket{i}\!\bra{i}_C\right)
    &= \mathcal{E}\left(\sum_i p_i \sigma_i \otimes (\ket{0}\!\bra{0} \otimes \ket{i}\!\bra{i})_C\right)\nonumber\\
    &= \sum_i p_i \mathcal{E}[\sigma_i \otimes \ket{0}\!\bra{0}_C],
\end{align}
{where the equalities in the first line holds since $\{|i\rangle\langle i|\}$ and $\{|0\rangle\langle 0|\otimes|i\rangle\langle i|\}$ can be converted to each other by LOCC, the equality in the second line is from the flag condition satisfied by any entanglement measure which is monotonic under LOCC,} see Theorem~2 in Ref.~\cite{horodecki2005simplifying}. 

By putting Eq.~\eqref{eq:step1}, Eq.~\eqref{eq:step2} and Eq.~\eqref{eq:step3} together, we complete the proof.
\end{proof}
{We recommend the reader to refer to Ref.~\cite{d2005classical} for more characterization of extreme POVMs, like necessary conditions and sufficient conditions.}
%-------------------------------------------------------------------------
\section{Proof of Observation~\ref{ob:dichotomic2}}\label{ap:b}
\begin{observation}\label{ob:dichotomic2}
For a convex entanglement measure $\mathcal{E}$, {if we replace $\mathcal{N}_C$ by $\mathcal{N}_1$ or $\mathcal{N}_2$ in the definition of $\Delta_{\mathcal{E}}$}, then the value of $\Delta_{\mathcal{E}}(\rho_{ABC})$ can be achieved with projective measurements.
\end{observation}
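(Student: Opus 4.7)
The plan is to invoke Observation~\ref{ob:encdingextreme2} separately for the two restricted scenarios, and in each case to identify the extremal elements of the relevant measurement or channel set as projective measurements.

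For $\mathcal{N}_2$, the measurement set $\mathcal{M}$ consists of dichotomic POVMs $\{m_0,m_1\}$ with $m_0+m_1=\id$, which is in bijection with the operator interval $\{m:0\le m\le \id\}$. A standard fact is that the extreme points of this interval are exactly the orthogonal projectors, since a Hermitian $m$ with $0\le m\le \id$ is extremal iff $m^2=m$. Applying Observation~\ref{ob:encdingextreme2} with $\partial\mathcal{M}$ being the set of projectors then immediately shows that the minimum of $\Delta_{\mathcal{E}}$ over $\mathcal{N}_2$ is achieved by projective measurements, with perfect encoding into two orthogonal register states (which fits in any register, including a qubit).

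For $\mathcal{N}_1$, the restriction that the register $E$ is a qubit means that the perfect-encoding step of Observation~\ref{ob:encdingextreme2} can accommodate only two orthogonal flag states. The LOCC-monotonicity argument used there then yields that, up to an LOCC-compatible relabeling of the encoding, one can restrict attention to channels whose preparations are the orthogonal pair $|0\rangle,|1\rangle$ in $E$, which forces the measurement to be effectively dichotomic on the qubit $C$. At this stage the argument for $\mathcal{N}_2$ applies and the extremal dichotomic qubit POVMs are rank-one projective measurements. The reduction to effectively dichotomic POVMs inside $\mathcal{N}_1$ should be justified via the characterization of extremal qubit-to-qubit entanglement-breaking channels in Ref.~\cite{ruskai2003qubit}, which constrains the extreme points of $\mathcal{N}_1$ to two-outcome measure-and-prepare form with projective POVM and pure-state preparations; a final LOCC step on $E$ then maps the pure preparations to the orthogonal pair $|0\rangle,|1\rangle$ without decreasing $\mathcal{E}$.

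The main obstacle is the $\mathcal{N}_1$ case, specifically the rigorous handling of qubit-to-qubit EB channels with more than two measurement outcomes. Given any $\Phi\in\mathcal{N}_1$ with $k>2$ outcomes and qubit preparations $\{\tau_i\}$, one must exhibit a $\Phi'\in\mathcal{N}_1$ with dichotomic projective POVM and orthogonal preparations for which $\mathcal{E}[\Phi'(\rho_{ABC})]\ge \mathcal{E}[\Phi(\rho_{ABC})]$. The naive strategies (coarse-graining outcomes, or projecting the $\tau_i$ onto a fixed orthogonal basis) go in the wrong direction of LOCC monotonicity, so the non-trivial content is in invoking the extremal-channel structure from Ref.~\cite{ruskai2003qubit} to guarantee that the maximum is already attained by a two-outcome extreme point; once this is in hand, the orthogonalization of preparations is a direct adaptation of the encoding-replacement step used in the proof of Observation~\ref{ob:encdingextreme2}.
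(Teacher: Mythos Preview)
Your treatment of $\mathcal{N}_2$ is correct and coincides with the paper's: the extreme points of the effect interval $\{m:0\le m\le\id\}$ are projectors, so by Observation~\ref{ob:encdingextreme2} the optimum is attained at a projective dichotomic measurement.

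For $\mathcal{N}_1$ you end up at the right place, but by a detour that is both unnecessary and, as you yourself flag, not fully justified. The paper's argument is much shorter. It does not try to argue that a qubit register forces an effectively dichotomic POVM; instead it applies the structural result of Ruskai directly at the level of channels: every qubit-to-qubit entanglement-breaking channel is a convex combination of classical-quantum channels, and a classical-quantum channel on a qubit is by definition a measurement in an orthonormal basis $\{|x_0\rangle,|x_1\rangle\}$ followed by a preparation. Convexity of $\mathcal{E}$ then pushes the maximum of $\mathcal{E}[\Phi_C(\rho_{ABC})]$ onto one of these CQ channels, whose measurement part is automatically projective and two-outcome. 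No separate reduction from $k>2$ outcomes to two outcomes is needed, and no orthogonalization of preparations on $E$ is required either, since that step is already contained in the perfect-encoding argument of Observation~\ref{ob:encdingextreme2}. In short, the ``main obstacle'' you describe is dissolved, not overcome: Ruskai's decomposition gives the two-outcome projective form in a single stroke, and the rest of your sketch (coarse-graining worries, LOCC direction of monotonicity, pure-state preparations) can simply be dropped.
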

\begin{proof}
From Observation~\ref{ob:encdingextreme2}, we know that for a convex entanglement measure $\mathcal{E}$ that satisfies the monotonicity condition, the optimal value of $\Delta_{\mathcal{E}}(\rho_{ABC})$ can always be obtained by the extreme points of destructive measurements in the sets $\mathcal{N}_1$ and $\mathcal{N}_2$.
Then it is sufficient to show that these extreme points are given by projective measurements.

{
First, we consider the case of $\mathcal{N}_1$.}
As proven in Ref.~\cite{ruskai2003qubit}, any entanglement breaking channel from qubit to qubit, i.e., any channel in $\mathcal{N}_1$, can be decomposed as a convex combination of classical-quantum channels.
Here recall that a channel $\Phi_C$ is called a classical-quantum channel if
\begin{equation}
    \Phi_C(\rho) = \sum_{i} \langle x_i|\rho|x_i\rangle \otimes \tau_i,
\end{equation}
where $\{|x_i\rangle\}$ is an orthonormal basis.
By definition, the classical-quantum channel is written in the composition of projective measurements and local state preparation.
That is, the extreme point in $\mathcal{N}_1$ is obtained by projective measurements.

{Next, we consider the case of $\mathcal{N}_2$.}
It is known that a POVM $\{m_1,\ldots,m_k\}$ is extreme if $m_i, m_j$ have disjoint supports for any $i\neq j$~\cite{d2005classical}.
In the dichotomic case, $m_1 = \id - m_2$, thus, $m_1, m_2$ can be diagonalized simultaneously.
Then, there is no overlap between the supports of $m_1, m_2$ if and only if they are orthogonal projectors.
Hence, the extremal points in $\mathcal{N}_2$ are also obtained by projective measurements.
\end{proof}

%----------------------------------------------------------------
\section{Details of computation in figures}\label{ap:c}
Since we consider the set of entanglement breaking channels from qubit to qubit in the examples, we only need to focus on dichomatic projective measurements $M = \{m_0, m_1\}$ and perfect encoding of the outcomes according to Observation~\ref{ob:encdingextreme2} and Observation~\ref{ob:dichotomic2}. 
In this case we have,
\begin{align}\label{eq:simplified}
  \Delta_{\mathcal{E}}(\rho_{ABC}) =  \mathcal{E}[\rho_{ABC}] - \max_{M\in \mathcal{P}}\sum_{i=0,1} p_i \mathcal{E}[\sigma_i\!\otimes\! |0\rangle\langle 0|],  
\end{align}
where $\mathcal{P}$ is the set of all dichotomatic projective measurements on qubit $C$, $p_i = \tr(\rho_{ABC} m_i)$, and $\sigma_i = \tr_C(\rho_{ABC} m_i)/p_i$.
Here, the entanglement measure $\mathcal{E}$ is taken to be either the multipartite negativity $N_{ABC}$ or the multipartite squashed entanglement $E_{sq}$.

First, let us consider the case of the multipartite negativity $N_{ABC}$.
Then we have
\begin{align}
    N_{ABC}(\sigma_i\!\otimes\! |0\rangle\langle 0|)&= N_{AB|C}(\sigma_i\!\otimes\! |0\rangle\langle 0|) +N_{BC|A}(\sigma_i\!\otimes\! |0\rangle\langle 0|) +N_{AC|B}(\sigma_i\!\otimes\! |0\rangle\langle 0|)\nonumber\\
    &= N_{B|A}(\sigma_i) +N_{A|B}(\sigma_i)\nonumber\\
    &= 2N_{A|B}(\sigma_i),
\end{align}
where the second equality is from the fact that  $\sigma_{i}^{T_A}\otimes |0\rangle\langle 0|$ has same non-zero eigenvalues as $\sigma_{i}^{T_A}$ as well as for the case  $B$.

Second, let us consider the case of the multipartite squashed entanglement $E_{sq}$.
Note that, for any $4$-partite state $\eta_{ABCX}$ such that $\tr_X(\eta_{ABCX}) = \sigma_i\!\otimes\! |0\rangle\langle 0|$, it can only be in the form $\gamma_{ABX}\!\otimes\! |0\rangle\langle 0|$, where $\tr_X(\gamma_{ABX}) = \sigma_i$. Thus,
\begin{align}
    E_{sq}(\sigma_i\!\otimes\! |0\rangle\langle 0|)&=\min_{\gamma_{ABX}\!\otimes\! |0\rangle\langle 0|} \frac{1}{2} I(A:B:C|X)\nonumber\\
    &=\min_{\gamma_{ABX}\!\otimes\! |0\rangle\langle 0|} \frac{1}{2} [S(AX)+S(BX)+S(CX)-S(ABCX)-2S(X)]\nonumber\\
    &=\min_{\gamma_{ABX}\!\otimes\! |0\rangle\langle 0|} \frac{1}{2} [S(AX)+S(BX)+S(X)-S(ABX)-2S(X)]\nonumber\\
    &=\min_{\gamma_{ABX}} \frac{1}{2} [S(AX)+S(BX)-S(ABX)-S(X)]\nonumber\\
    &=E_{sq}^{(2)}(\sigma_i),
\end{align}
where
in the third line we employ the additivity of the von Neumann entropy, and we denote $E_{sq}^{(2)}$ the bipartite squashed entanglement~\cite{christandl2004squashed}.
In the case that $\rho_{ABC}$ is a pure state, each $\sigma_i$ is also a pure state.
From the result of Ref.~\cite{christandl2004squashed}, we have
\begin{align}
    E_{sq}^{(2)}(\sigma_i) &= S(A)+S(B).
\end{align}

Therefore, once we have parameterized the $2$-dimensional projective measurement $M$, the numerical calculation of $\Delta_{\mathcal{E}}(\rho_{ABC})$ can be easily performed by brute force optimization in each example. {To be more explicitly, each $2$-dimensional rank-$1$ projective measurement $M$ corresponds to a vector which can be parameterized as $\langle v| = (\cos x, e^{it} \sin x)$ such that $M = \{|v\rangle\langle v|, \id - |v\rangle\langle v|\}$. In the calculation, we have taken $x$ in the discrete set $\{\pi k/300\}_{k=0}^{300}$ and $t$ in the set $\{\pi j/50\}_{j=0}^{50}$. For each measurement direction defined by the pair $(x,t)$, the poset-selected bipartite states and their entanglement can be computed directly by the definition of the entanglement measure. By choosing the maximal entanglement of the post-measurement state over all pairs $(x,t)$, we obtain the numerical approximation of $\Delta_{\mathcal{E}}(\rho_{ABC})$ for $\mathcal{E}$ either to be $N_{ABC}$ or $E_{sq}$.

We remark that the three-dimensional non-trivial projective measurements can also be parameterized by $M = \{|v\rangle\langle v|, \id - |v\rangle\langle v|\}$, where $|v\rangle$ is a three-dimensional complex vector $(\cos x_1, e^{it_1} \sin x_1\cos x_2, e^{it_2} \sin x_1\sin x_2)$.
Note that for the sake of simplicity we considered the case of only real parameters to obtain the result of three-qutrit states in the main text.
}

%--------------------------------------------------------------------
\section{Proof of Observation~\ref{ob:lower2}}\label{ap:d}
\begin{observation} \label{ob:lower2}
 For a convex entanglement measure $\mathcal{E}$, {and for the set $\mathcal{N}_C$},
  we have
  \begin{equation}
    \Delta_{\mathcal{E}}(\rho_{ABC})
    \ge
    \min_{|x\rangle}
    \left\{
    \mathcal{E}[\rho_{ABC}]- \mathcal{E}[\sigma_{|x\rangle} \otimes |0\rangle\langle 0|] \right\},
 \end{equation}
  where $|x\rangle$ is a measurement direction on the party $C$ and $\sigma_{|x\rangle} =
  \langle x|\rho_{ABC}|x\rangle / \tr[\langle x|\rho_{ABC}|x\rangle]$ 
  is a normalized state.
\end{observation}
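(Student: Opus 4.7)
The plan is to leverage the reduction already obtained in Observation~\ref{ob:encdingextreme2}: the minimum defining $\Delta_{\mathcal{E}}(\rho_{ABC})$ is attained over extreme POVMs on $C$ combined with perfect encoding of the outcomes. The key additional fact I would invoke is the standard characterization of extreme POVMs in finite dimensions (see Ref.~\cite{d2005classical}), which guarantees that every effect of an extreme POVM is rank-one. This is the ingredient that turns the post-measurement ensemble into a convex combination indexed by pure measurement directions on $C$, which is precisely the shape of the right-hand side of the claimed bound.

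Concretely, I would proceed as follows. By Observation~\ref{ob:encdingextreme2}, it suffices to show that for every extreme POVM $M = \{m_i\}$ on $C$, one has
\begin{equation}
\sum_i p_i\, \mathcal{E}[\sigma_i \otimes |0\rangle\langle 0|] \le \max_{|x\rangle}\mathcal{E}[\sigma_{|x\rangle}\otimes |0\rangle\langle 0|].
\end{equation}
Writing each rank-one effect as $m_i = \lambda_i |x_i\rangle\langle x_i|$ with $|x_i\rangle$ a unit vector and $\lambda_i > 0$, a short calculation from the definitions $p_i = \tr(\rho_{ABC}m_i)$ and $\sigma_i = \tr_C(\rho_{ABC}m_i)/p_i$ yields $\sigma_i = \sigma_{|x_i\rangle}$. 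The sum on the left is then a convex combination of the quantities $\mathcal{E}[\sigma_{|x_i\rangle}\otimes|0\rangle\langle 0|]$, and bounding it first by its maximum over $i$ and then by the supremum over arbitrary directions $|x\rangle$ gives the inequality above. Subtracting from $\mathcal{E}[\rho_{ABC}]$ and taking the minimum over $M$ produces the claimed lower bound on $\Delta_{\mathcal{E}}(\rho_{ABC})$.

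The only step that genuinely needs care is the reduction to rank-one effects; everything else is a one-line convex-combination argument. This reduction is handled by Observation~\ref{ob:encdingextreme2} together with the extreme-POVM characterization, but I would also note an alternative route that avoids invoking extremality: any POVM $\{m_i\}$ can be refined through the spectral decomposition $m_i = \sum_k \lambda_{ik} |x_{ik}\rangle\langle x_{ik}|$ to a rank-one POVM, and merging back the outcome labels is a classical coarse-graining on the register, hence LOCC, so by monotonicity it can only decrease $\mathcal{E}$ of the resulting state. I therefore do not anticipate a genuine technical obstacle; the single point requiring vigilance is bookkeeping the direction of inequalities, since lower-bounding $\Delta_{\mathcal{E}}$ translates into upper-bounding $\sum_i p_i \mathcal{E}[\sigma_i \otimes |0\rangle\langle 0|]$.
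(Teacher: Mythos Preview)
Your primary argument has a genuine gap: it is not true that every extreme POVM has only rank-one effects. Projective measurements are always extreme in the convex set of POVMs, and they can contain projectors of rank larger than one (already the trivial POVM $\{\id\}$ when $\dim C\ge 2$, and nontrivially any two-outcome PVM $\{P,\id-P\}$ with $\mathrm{rank}\,P\ge 2$ when $\dim C\ge 3$). Hence the step ``write each effect as $m_i = \lambda_i |x_i\rangle\langle x_i|$'' is not justified by extremality, and you cannot conclude $\sigma_i = \sigma_{|x_i\rangle}$ along that line.

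Your alternative route via spectral refinement of the effects is correct, and it is essentially the paper's argument stripped of the detour through Observation~\ref{ob:encdingextreme2}. The paper does not pass through extreme POVMs at all: it uses directly the standard fact from Ref.~\cite{horodecki2003entanglement} that every entanglement-breaking channel admits a representation with a rank-one POVM $\{q_i|x_i\rangle\langle x_i|\}$ and pure preparations $\{|\psi_i\rangle\langle\psi_i|\}$, so that $\Phi_C(\rho_{ABC}) = \sum_i q_i p_i\, \sigma_{|x_i\rangle} \otimes |\psi_i\rangle\langle\psi_i|$. Convexity of $\mathcal{E}$ then gives $\mathcal{E}[\Phi_C(\rho_{ABC})] \le \max_i \mathcal{E}[\sigma_{|x_i\rangle}\otimes|\psi_i\rangle\langle\psi_i|]$, and a local unitary on $C$ turns each $|\psi_i\rangle$ into $|0\rangle$. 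This is cleaner than your refinement-plus-coarse-graining phrasing because it applies convexity once, directly to $\mathcal{E}[\Phi_C(\rho_{ABC})]$, without invoking Observation~\ref{ob:encdingextreme2} or the flag condition.
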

\begin{proof}
 For a given entanglement breaking channel $\Phi_C$, it can be equivalently characterized~\cite{horodecki2003entanglement} by a POVM with $M=\{q_i |x_i\rangle\langle x_i|\}$ and a preparation $\{|\psi_i\rangle\langle \psi_i|\}$.
 That is,
 \begin{align}
     \Phi_C(\rho_{ABC}) &= \sum_i q_i \langle x_i|\rho_{ABC} |x_i\rangle \otimes |\psi_i\rangle\langle \psi_i| \nonumber\\
     &= \sum_i q_i p_i \sigma_{|x_i\rangle} \otimes |\psi_i\rangle\langle \psi_i|,
 \end{align}
where $p_i = \tr(\langle x_i|\rho_{ABC} |x_i\rangle)$, $\sigma_{|x_i\rangle}$ is the normalized state of $\langle x_i|\rho_{ABC} |x_i\rangle$, and $\sum_{i}q_i p_i =1$.

For any convex entanglement measure $\mathcal{E}$, we then have
\begin{align}
     \mathcal{E}[\Phi_C(\rho_{ABC})]
     &\le \sum_i q_i p_i \mathcal{E}[\sigma_{|x_i\rangle} \otimes |\psi_i\rangle\langle \psi_i|] \nonumber\\
     &\le \max_i \mathcal{E}[\sigma_{|x_i\rangle} \otimes |\psi_i\rangle\langle \psi_i|]\nonumber\\
     &\le \max_{|x\rangle} \mathcal{E}[\sigma_{|x\rangle}\otimes |0\rangle\langle 0|],
 \end{align}
 where in the last line we apply local untary operations on the party $C$ to rotate the states to $\ket{0}$ and maximize over a more general range of measurement directions.
\end{proof}
{In principle, the optimization can be done similarly as in Appendix C. As for the application of Observation~\ref{ob:lower2} in Condition~7, we only need to show that $\sigma_{|x\rangle}$ is separable for each $|x\rangle$, which can be checked by the PPT condition in the case that $A$ and $B$ are two-dimensional subsystems with symbolic calculations. For this purpose, we do not need to specify the values of parameters in $|x\rangle$.}

%-------------------------------------------------------------------------
\section{Proof of Observation~\ref{ob:bounds2}}\label{ap:e}

\begin{observation}\label{ob:bounds2}
{For the entanglement measure $\mathcal{E}$ being } the tripartite relative entropy of entanglement $R_{ABC}$, we have
 \begin{align}
     R_{AB|C}(\rho_{ABC}) \leq \Delta_{\mathcal{E}}(\rho_{ABC})
     \leq 3 D_{AB|C}(\rho_{ABC}).
 \end{align}
\end{observation}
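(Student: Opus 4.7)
The plan is to treat the two inequalities separately: the lower bound follows from a short decomposition argument, while the upper bound reduces, via a Pythagorean identity for classical-quantum states, to three contributions each controlled by $D_{AB|C}$.

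For the lower bound I use the decomposition $R_{ABC}=R_{AB|C}+R_{BC|A}+R_{AC|B}$. Any $\Phi_C\in\mathcal{N}_C$ is entanglement-breaking on $C$, so $\Phi_C(\rho_{ABC})$ is $AB|C$-separable and $R_{AB|C}(\Phi_C(\rho_{ABC}))=0$. Meanwhile $\Phi_C$ is local on $C$, which sits within both $BC$ and $AC$, so by LOCC monotonicity of the bipartite relative entropy of entanglement, $R_{BC|A}$ and $R_{AC|B}$ do not grow under $\Phi_C$. Summing the three bipartite pieces gives $R_{ABC}(\rho_{ABC})-R_{ABC}(\Phi_C(\rho_{ABC}))\ge R_{AB|C}(\rho_{ABC})$ uniformly in $\Phi_C$, and minimizing over $\Phi_C$ produces the lower bound.

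For the upper bound I specialize $\Phi_C^\star$ to the POVM with perfect encoding that achieves $D_{AB|C}(\rho_{ABC})$, and write $\rho^\star:=\Phi_C^\star(\rho_{ABC})=\sum_i p_i\,\sigma_i^{AB}\otimes|i\rangle\langle i|_C$, so that $S(\rho_{ABC}\|\rho^\star)=D_{AB|C}(\rho_{ABC})$. Then $\Delta_{\mathcal{E}}(\rho_{ABC})\le R_{ABC}(\rho_{ABC})-R_{ABC}(\rho^\star)$, and the right-hand side splits into three bipartite increments. The $AB|C$ one is at most $D_{AB|C}(\rho_{ABC})$ directly, because $R_{AB|C}(\rho^\star)=0$ and $\rho^\star\in\mathrm{SEP}(AB|C)$ is a feasible comparison state for $R_{AB|C}(\rho_{ABC})$. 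For the $BC|A$ and $AC|B$ increments I exploit the classical-quantum structure of $\rho^\star$: the separable state $\omega^\star$ minimizing $S(\rho^\star\|\cdot)$ over $\mathrm{SEP}(A|BC)$ can be assumed block-diagonal in the same basis $\{|i\rangle_C\}$, since dephasing $\mathcal{D}$ in that basis is local on $C$, preserves $A|BC$ separability, and, as $\mathcal{D}(\rho^\star)=\rho^\star$, cannot increase $S(\rho^\star\|\cdot)$ by the data-processing inequality. For such $\omega^\star=\sum_i q_i\,\omega_i^{AB}\otimes|i\rangle\langle i|_C$, the identities $\log\rho^\star=\sum_i\log(p_i\sigma_i^{AB})\otimes|i\rangle\langle i|_C$ and $\langle i|_C\rho_{ABC}|i\rangle_C=p_i\sigma_i^{AB}$ yield by direct computation the Pythagorean identity
\begin{equation*}
S(\rho_{ABC}\|\omega^\star)=S(\rho_{ABC}\|\rho^\star)+S(\rho^\star\|\omega^\star).
\end{equation*}
Combined with $R_{BC|A}(\rho_{ABC})\le S(\rho_{ABC}\|\omega^\star)$ and $R_{BC|A}(\rho^\star)=S(\rho^\star\|\omega^\star)$, this gives $R_{BC|A}(\rho_{ABC})-R_{BC|A}(\rho^\star)\le D_{AB|C}(\rho_{ABC})$; the same argument handles $R_{AC|B}$, and summing the three increments produces the factor of three.

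The main obstacle is establishing the Pythagorean identity and reducing to a classical-quantum optimizer $\omega^\star$; once these ingredients are in hand, the remainder of the proof is bookkeeping and standard LOCC monotonicity. A naive ``triangle inequality'' $S(\rho\|\omega)\le S(\rho\|\rho^\star)+S(\rho^\star\|\omega)$ fails for general $\omega$, so the special block-diagonal structure of both $\rho^\star$ and $\omega^\star$ in the measurement basis is essential for upgrading a pure LOCC-monotonicity statement into the sharp quantitative bound.
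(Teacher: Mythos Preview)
Your proof is correct and structurally mirrors the paper's: the lower bound argument is identical, and for the upper bound both split $R_{ABC}-R_{ABC}(\rho^\star)$ into its three bipartite pieces and bound each by $D_{AB|C}$. The difference is that the paper imports the inequality $R_{BC|A}(\rho_{ABC})\le D_{AB|C}(\rho_{ABC})+R_{BC|A}[\Phi_C(\rho_{ABC})]$ as Lemma~1 of Ref.~\cite{chuan2012quantum}, whereas you prove it from scratch via the Pythagorean identity $S(\rho_{ABC}\|\omega^\star)=S(\rho_{ABC}\|\rho^\star)+S(\rho^\star\|\omega^\star)$, together with the observation that the optimal separable comparison state $\omega^\star$ for $\rho^\star$ may be taken block-diagonal in the measurement basis (dephasing plus data processing). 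Your route is more self-contained and makes transparent \emph{why} the cited lemma holds; the paper's route is shorter but relies on the external reference. Both yield exactly the same chain of inequalities.
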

\begin{proof}
 We begin by noting that Lemma 1 in Ref.~\cite{chuan2012quantum}:
 for a given tripartite state $\rho_{ABC}$, it holds that
\begin{equation}
    \hspace{-0.5em}R_{BC|A}(\rho_{ABC})
    \!\leq\! {D}_{AB|C}(\rho_{ABC})\! +\!  R_{BC|A}[\Phi_C(\rho_{ABC})],
\end{equation}
where $\Phi_C(\rho_{ABC})= \sum_i p_i \sigma_i^{AB} \otimes \ket{i}\!\bra{i}^C$ where $\tau_i = \ket{i}\!\bra{i}^C$. 
Exchanging $A$ and $B$, we similarly have
\begin{align}
    \hspace{-0.8em}R_{AC|B}(\rho_{ABC})
    \!\leq\! {D}_{AB|C}(\rho_{ABC})
    \!+\! R_{AC|B}[\Phi_C(\rho_{ABC})].
\end{align}
Summarizing both inequalities leads to
\begin{align}
    R_{BC|A}(\rho_{ABC}) + R_{AC|B}(\rho_{ABC})
    \leq 2{D}_{AB|C}(\rho_{ABC})
    + R_{ABC}[\Phi_C(\rho_{ABC})],
\end{align}
where
we use the fact that 
$R_{AB|C}[\Phi_C(\rho_{ABC})] = 0$
since $\Phi_C(\rho_{ABC})$ is separable with respect to $AB|C$.
Rewriting this left hand side as $R_{ABC}(\rho_{ABC})-R_{AB|C}(\rho_{ABC})$, we have
\begin{align}
    R_{ABC}(\rho_{ABC}) - R_{ABC}[\Phi_C(\rho_{ABC})]
    \leq 2{D}_{AB|C}(\rho_{ABC})
    + R_{AB|C}(\rho_{ABC}).
\end{align}
By definition, $\Delta_{\mathcal{E}}(\rho_{ABC})$ is always no more than this left hand side, {since $\Phi_C$ is just a special entanglement-breaking channel}.
Then we obtain
\begin{align}
    \Delta_{\mathcal{E}}(\rho_{ABC})
    \leq 2{D}_{AB|C}(\rho_{ABC})
    + R_{AB|C}(\rho_{ABC}).
\end{align}
Finally, since $R_{AB|C}(\rho_{ABC}) \leq {D}_{AB|C}(\rho_{ABC})$,
we find the upper bound.

Concerning the lower bound, we have
\begin{align}
    \Delta_{\mathcal{E}}(\rho_{ABC})
    &= \min_{\Phi_C \in \nc}
    \left\{R_{ABC}(\rho_{ABC})- R_{ABC}[\Phi_C(\rho_{ABC})]\right\}\nonumber\\
    &\geq
    R_{AB|C}(\rho_{ABC})
    +\min_{\Phi_C \in \nc}
    \left\{R_{BC|A}(\rho_{ABC})- R_{BC|A}[\Phi_C(\rho_{ABC})]\right\}\nonumber\\
    &+\min_{\Phi_C \in \nc}
    \left\{R_{AC|B}(\rho_{ABC})- R_{AC|B}[\Phi_C(\rho_{ABC})]\right\},
\end{align}
where we again use that $R_{AB|C}[\Phi_C(\rho_{ABC})] = 0$.
Since the relative entropy of entanglement satisfies the monotonicity condition, we have that
$R_{BC|A}(\rho_{ABC})- R_{BC|A}[\Phi_C(\rho_{ABC})] \geq 0$
and
$R_{AC|B}(\rho_{ABC})- R_{AC|B}[\Phi_C(\rho_{ABC})] \geq 0$.
Then we arrive at the lower bound.
\end{proof}

%-------------------------------------------------------------------------
\section{Proof of Observation~\ref{ob:relative2}}\label{ap:f}

\begin{observation}\label{ob:relative2}
More generally, if $D_{AB|C}(\rho_{ABC}) = 0$, then we have  $\Delta_{\mathcal{E}}(\rho_{ABC}) = 0$ for any entanglement measure $\mathcal{E}$.
\end{observation}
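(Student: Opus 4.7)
The plan is to exhibit a single explicit operation $\Phi_C \in \mathcal{N}_C$ under which $\mathcal{E}$ is preserved exactly, so that the minimum defining $\Delta_{\mathcal{E}}(\rho_{ABC})$ is at most zero; combined with the non-negativity of $\Delta_{\mathcal{E}}$ guaranteed by LOCC monotonicity of $\mathcal{E}$, the claim then follows immediately.

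First I would unpack the hypothesis $D_{AB|C}(\rho_{ABC}) = 0$. By the definition of the relative-entropy discord given in the main text, this vanishing is equivalent to $\rho_{ABC}$ itself lying in the set $\Lambda$ of quantum--classical states across the cut $AB|C$. Hence there exist an orthonormal basis $\{\ket{i}\}$ of $\mathcal{H}_C$, states $\sigma_i$ on $AB$, and probabilities $p_i$ such that
\begin{equation}
\rho_{ABC} = \sum_i p_i\, \sigma_i \otimes |i\rangle\langle i|_C .
\end{equation}

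Next I would take $\Phi_C$ to be the projective measurement on $C$ in the basis $\{\ket{i}\}$ followed by the perfect encoding $\tau_i = |i\rangle\langle i|_E$ into the register $E$. This operation has precisely the form of Eq.~\eqref{eq:operation}, so $\Phi_C \in \mathcal{N}_C$. A direct computation gives
\begin{equation}
\Phi_C(\rho_{ABC}) = \sum_i p_i\, \sigma_i \otimes |i\rangle\langle i|_E ,
\end{equation}
which is the same state as $\rho_{ABC}$ up to the relabeling $C \leftrightarrow E$ of the third subsystem (and a local unitary on $E$ if the chosen orthonormal basis of $E$ differs from the one on $C$).

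Any entanglement measure is invariant under such a local isomorphism on a single party, so $\mathcal{E}[\Phi_C(\rho_{ABC})] = \mathcal{E}[\rho_{ABC}]$, and therefore $\Delta_{\mathcal{E}}(\rho_{ABC}) \leq 0$. Since LOCC monotonicity of $\mathcal{E}$ forces $\Delta_{\mathcal{E}}(\rho_{ABC}) \geq 0$, we conclude $\Delta_{\mathcal{E}}(\rho_{ABC}) = 0$. There is no genuine technical obstacle here: the entire content of the proof is the observation that vanishing $D_{AB|C}$ is \emph{precisely} the structural condition which ensures that the natural von~Neumann measurement on $C$, paired with perfect encoding into $E$, reproduces $\rho_{ABC}$ on the new register, so no entanglement is lost at all.
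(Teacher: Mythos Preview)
Your proof is correct and follows essentially the same logic as the paper's: both exhibit an operation $\Phi_C\in\mathcal{N}_C$ that leaves $\rho_{ABC}$ invariant, so that $\Delta_{\mathcal{E}}\le 0$, and then invoke LOCC monotonicity for the reverse inequality. The only difference is that the paper appeals to an external characterization (Proposition~21 of Ref.~\cite{seshadreesan2015fidelity}) stating that $D_{AB|C}=0$ is equivalent to the existence of an entanglement-breaking channel fixing $\rho_{ABC}$, whereas you construct that channel explicitly from the quantum--classical decomposition implied by $D_{AB|C}=0$; your version is therefore more self-contained.
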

\begin{proof}
We note that $D_{AB|C}(\rho_{ABC})=0$ if and only if there exists an entanglement-breaking channel $\Phi_C$ such that $\Phi_C(\rho_{ABC}) = \rho_{ABC}$ (see Proposition 21 in Ref.~\cite{seshadreesan2015fidelity} for more details).
By definition, 
{
\begin{align}
  \Delta_{\mathcal{E}}(\rho_{ABC}) &= \min_{\Phi'_C\in \mathcal{N}_C}
  \left\{\mathcal{E}[\rho_{ABC}] - \mathcal{E}[\Phi'_C(\rho_{ABC})]
  \right\},\nonumber\\ 
  &\le \mathcal{E}[\rho_{ABC}] - \mathcal{E}[\Phi_C(\rho_{ABC})] \nonumber\\
  &= \mathcal{E}[\rho_{ABC}] - \mathcal{E}[\rho_{ABC}]\nonumber\\
  &=0.
\end{align}
Since $\Delta_{\mathcal{E}}(\rho_{ABC})$ is nonnegative for any entanglement measure $\mathcal{E}$ which is monotonic under LOCC,
}
this eventually implies that $\Delta_{\mathcal{E}}(\rho_{ABC}) = 0$ for any entanglement measure $\mathcal{E}$ { which is assumed to be monotonic under LOCC}.
\end{proof}

%-------------------------------------------------------------------------
\section{Proof of Observation~\ref{ob:lockneg2}}\label{ap:g}
\begin{observation} \label{ob:lockneg2}
  For {the entanglement measure $\mathcal{E}$ to be} the tripartite negativity $N_{ABC}$, we have
  \begin{align}
    \Delta_{\mathcal{E}}(\beta_{ABC}) = 2^{n-2} + 1/2.
\end{align}
Thus, $\Delta_{\mathcal{E}}(\beta_{ABC})$ can be arbitrary large.
\end{observation}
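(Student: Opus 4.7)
\section*{Proof proposal for Observation~\ref{ob:lockneg2}}

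The plan is to compute both $N_{ABC}(\beta_{ABC})$ and $\max_{\Phi_C\in\mathcal{N}_C}N_{ABC}[\Phi_C(\beta_{ABC})]$ in closed form, exploiting the fact that a maximally entangled state on dimension $d\otimes d$ has negativity $(d-1)/2$. For the initial state $\beta_{ABC}=|\Psi^+\rangle\!\langle\Psi^+|^{\otimes n}$, each of the three bipartite negativities can be read off directly from which Bell pairs cross the cut: the pair $(a_n,b_n)$ is the only one crossing $AB|C$, giving $N_{AB|C}=1/2$; the pairs $(a_i,b_i)$ with $i<n$ cross $AC|B$, giving $N_{AC|B}=(2^{n-1}-1)/2$; all $n$ pairs cross $BC|A$, giving $N_{BC|A}=(2^n-1)/2$. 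Summing yields $N_{ABC}(\beta_{ABC})=2^{n-1}+2^{n-2}-1/2$.

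For the upper bound on $\Delta_{\mathcal{E}}$, the plan is to produce an explicit operation $\Phi_C$ by measuring $b_n$ in the computational basis and perfectly encoding the two outcomes. For the lower bound, Observation~\ref{ob:encdingextreme} lets me restrict to extremal POVMs with perfect encoding; since $C$ is a qubit, every extremal POVM element has the rank-one form $m_i=q_i|x_i\rangle\!\langle x_i|$. The key observation is that, because $|\Psi^+\rangle_{a_nb_n}$ is maximally entangled, the post-selected state on $a_n$ after obtaining outcome $i$ is the pure state $|x_i^*\rangle_{a_n}$. Hence, regardless of the POVM,
\begin{equation}
\Phi_C(\beta_{ABC})=|\Psi^+\rangle\!\langle\Psi^+|^{\otimes(n-1)}_{A'B}\otimes\sigma_{a_nE},\quad \sigma_{a_nE}=\sum_i \tfrac{q_i}{2}|x_i^*\rangle\!\langle x_i^*|\otimes|i\rangle\!\langle i|,
\end{equation}
where $A'=\{a_1,\ldots,a_{n-1}\}$. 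This tensor-product structure is the crux: the PT across each of the three bipartitions factorizes, and since $\sigma_{a_nE}$ is separable (classical on $E$, hence PPT), it contributes unit trace but no negativity in any cut. Using that $(|x^*\rangle\!\langle x^*|)^{T}=|x\rangle\!\langle x|\ge 0$ handles the one cut that partially transposes $a_n$.

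A short calculation then gives, for \emph{every} POVM, $N_{AB|E}[\Phi_C(\beta_{ABC})]=0$, and $N_{BE|A}[\Phi_C(\beta_{ABC})]=N_{AE|B}[\Phi_C(\beta_{ABC})]=(2^{n-1}-1)/2$, inherited from the $(n-1)$ undisturbed Bell pairs on $A'B$. Summing yields $N_{ABC}[\Phi_C(\beta_{ABC})]=2^{n-1}-1$, independent of $\Phi_C$. Subtracting gives the stated value $\Delta_{\mathcal{E}}(\beta_{ABC})=2^{n-2}+1/2$, and the asymptotic statement follows.

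I expect the main obstacle to be the last point: certifying that \emph{no} choice of $\Phi_C\in\mathcal{N}_C$ (including multi-outcome POVMs with arbitrary register states) can do better than the computational-basis measurement. The trick is to use Observation~\ref{ob:encdingextreme} to reduce to rank-one POVM elements with perfect encoding, and then to exploit the purity of the conditional $a_n$ state to get a clean tensor-product decomposition in which the $\Phi_C$-dependent factor $\sigma_{a_nE}$ is manifestly separable and therefore invisible to every negativity. Care is required to verify that the PT of $\sigma_{a_nE}$ across each relevant cut remains positive semidefinite, which is where the identity $(|x^*\rangle\!\langle x^*|)^T=|x\rangle\!\langle x|$ is used.
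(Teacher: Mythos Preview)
Your proposal is correct and arrives at the same value, but by a structurally different route than the paper. The paper treats the three bipartitions separately and sandwich-bounds $N_{BC|A}[\Phi_C(\beta_{ABC})]$: an upper bound from convexity together with the dimension bound $N\le (d-1)/2$ on bipartite negativity, and a lower bound from LOCC monotonicity (tracing out $C$). Your argument instead exploits the special feature that measuring one half of a maximally entangled pair in a rank-one element leaves the other half in a \emph{pure} state, so that $\Phi_C(\beta_{ABC})$ factorises as $(n-1)$ untouched Bell pairs on $A'B$ tensored with a separable state $\sigma_{a_nE}$. Because the second factor is PPT across every relevant cut (and has unit trace), all three negativities reduce to those of the $(n-1)$ Bell pairs, and the answer drops out without any sandwiching. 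This is cleaner for the specific state $\beta_{ABC}$; the paper's bounds-based argument is less state-specific and makes the role of convexity and LOCC monotonicity more explicit.

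One small technical point to tighten: your reduction ``extremal POVMs on a qubit have rank-one elements'' is not literally true (the trivial single-outcome POVM $\{\id\}$ is extremal with a rank-two element), and Observation~\ref{ob:encdingextreme} by itself does not give rank-one elements in general. You can bypass this entirely, as the paper does, by writing an arbitrary entanglement-breaking channel directly in Holevo form $\Phi_C(\rho)=\sum_i q_i\langle x_i|\rho|x_i\rangle\,|\psi_i\rangle\!\langle\psi_i|$ with rank-one effects; your tensor-product factorisation then goes through for every $\Phi_C\in\mathcal{N}_C$ without appeal to extremality. Alternatively, keep Observation~\ref{ob:encdingextreme} and simply refine any POVM to rank-one via the spectral decomposition of each effect, noting that convexity of $N_{ABC}$ only helps; since every rank-one element yields the same value $2^{n-1}-1$, the supremum is attained.
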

\begin{proof}
To prove this, we first show that for a $d\times d$-dimensional bipartite state, its negativity is no more than $(d-1)/2$.
Since the negativity is a convex function, we only need to prove it for {pure states}.
Let us write a pure state $|\psi\rangle$ as
  \begin{equation}
    \ket{\psi} = \sum_{i=1}^d \lambda_i \ket{a_ib_i},
    \,\, \sum_i \lambda_i^2 = 1,
    \,\, \lambda_i \ge 0.
  \end{equation}
  Then direct calculation yields that
  \begin{equation} \label{eq:uppernegativity}
    N(\ket{\psi})
    = \sum_{1\le i<j\le n} \lambda_i\lambda_j
    \le \frac{d-1}{2} \sum_{i=1}^d \lambda_i^2 = \frac{d-1}{2}.
  \end{equation}
  Here the maximal value $(d-1)/2$ can be saturated by the maximally entangled state
  $\ket{\Psi^+_d} = \frac{1}{\sqrt{d}}\sum_{i=0}^{d-1}\ket{ii}$.

Next, let us recall the $n$-copy of Bell state  $\beta_{ABC}=|{\Psi^+}\rangle\langle {\Psi^+}|^{\otimes n}$.
We remark that this $n$-copy state can be represented by the maximally entangled state in $(2^n\times2^n)$-dimensional systems $\ket{\Psi^+_{2^n}}$.
This leads to
\begin{align}\label{eq:g4}
    N_{BC|A}(\beta_{ABC}) = (2^n - 1) /2.
\end{align}

Suppose that an entanglement breaking channel $\Phi_C$ acts on the $n$-th particle of the last party $b_n$, equivalently, on the party $C$.
Since all entanglement breaking channels can be decomposed into measure and prepare operations, we again write the measure process for $\Phi_C$ as the form of the POVM with $M=\{q_i|x_i\rangle\langle x_i|\}$ and the preparation process as $\{|\psi_i\rangle\langle \psi_i|\}$, i.e.,
\begin{align}
     \Phi_C(\beta_{ABC})
     = \sum_i q_i p_i\sigma_{\ket{x_i}} \otimes |\psi_i\rangle\langle \psi_i|,
 \end{align}
where
$p_i = \tr(\langle x_i|\beta_{ABC}|x_i\rangle)$,
$\sigma_{\ket{x_i}}$ is the normalized pure state of $\langle x_i|\beta_{ABC}|x_i\rangle$,
and
$\sum_{i} q_i p_i =1$.
Then we have
 \begin{align}
    N_{BC|A}[\Phi_C(\beta_{ABC})]
    &=N_{BC|A}\left(\sum_i q_i p_i \sigma_{\ket{x_i}} \otimes |\psi_i\rangle\langle \psi_i|\right)\nonumber\\
    &\le\sum_i q_i p_i N_{BC|A}\left( \sigma_{\ket{x_i}} \otimes |\psi_i\rangle\langle \psi_i|\right)\nonumber\\
    &=\sum_i q_i p_i N_{BC|A}\left( \sigma_{\ket{x_i}} \otimes |0\rangle\langle 0|\right)\nonumber\\
    &= \sum_i q_i p_i N_{AB}(\sigma_{\ket{x_i}})\nonumber\\
    &\le \sum_i q_i p_i (2^{n-1} -1) /2\nonumber\\
    &= (2^{n-1} - 1) /2,
\end{align}
where in the second line we employ the convexity of negativity.
In the third line we apply local untary operations on the party $C$ to rotate the states $|\psi_i\rangle$'s to $\ket{0}$.
In the fourth line, we use fact that negativity is invariant under local unitaries and adding local ancillas, see \cite{horodecki2005simplifying}.
{In the fifth line, we apply the upper bound given in Eq.~(\ref{eq:uppernegativity}).}

On the other hand, we obtain
\begin{align}
    N_{BC|A}[\Phi_C(\beta_{ABC})]
    &\ge N_{BC|A}[\tr_C(\Phi_C(\beta_{ABC}))\otimes |0\rangle\langle 0|_C]\nonumber\\
    &= N_{B|A}[\tr_C(\beta_{ABC})]\nonumber\\
    &= N_{B|A}[(|\Psi^+\rangle\langle \Psi^+|_{AB})^{\otimes (n-1)}\otimes \tr_C(|\Psi^+\rangle\langle \Psi^+|_{AC})]\nonumber\\
    &= N_{B|A}[(|\Psi^+\rangle\langle \Psi^+|_{AB})^{\otimes (n-1)}]\nonumber\\
    &= (2^{n-1} - 1) /2.
\end{align}
In the first line we use the LOCC monotonicity, and in the second line we make use of the fact that $\tr_C \circ \, \Phi_C = \tr_C$.
In the fourth line, we use fact that negativity is invariant under adding local ancillas, see \cite{horodecki2005simplifying}.

Thus, independently of the entanglement breaking channel $\Phi_C$, we show
\begin{align}
    N_{BC|A}[\Phi_C(\beta_{ABC})] = (2^{n-1} - 1) /2.
\end{align}
This result directly leads to
\begin{equation}\label{nega:bc|a}
    {N}_{BC|A}(\beta_{ABC}) - {N}_{BC|A}\left[\Phi_C(\beta_{ABC})\right] = 2^{n-2}.
\end{equation}
Also, since negativity is invariant under adding local ancillas, we have
\begin{align}\label{eq:g10}
    {N}_{B|CA}(\beta_{ABC}) = {N}_{B|CA}\left[\Phi_C(\beta_{ABC})\right] = {N}_{B|A}\left[(|\Psi^+\rangle\langle \Psi^+|_{AB})^{\otimes (n-1)}\right] = (2^{n-1}-1)/2,
\end{align}
which implies
\begin{align}\label{nega:b|ca}
    {N}_{B|CA}(\beta_{ABC}) - {N}_{B|CA}\left[\Phi_C(\beta_{ABC})\right] = 0.
\end{align}
Similarly, we have
\begin{align}\label{eq:g12}
{N}_{AB|C}(\beta_{ABC}) = {N}_{A|C}\left[|\Psi^+\rangle\langle \Psi^+|_{AC}\right] = 1/2.
\end{align}
The fact that $\Phi_C$ is an entanglement-breaking channel implies that 
\begin{equation}
    {N}_{AB|C}\left[\Phi_C(\beta_{ABC})\right] = 0.
\end{equation}
Consequently, we have
\begin{align}\label{nega:ab|c}
 {N}_{AB|C}(\beta_{ABC}) - {N}_{AB|C}\left[\Phi_C(\beta_{ABC})\right] = 1/2.
\end{align}

By definition of $\Delta_{\mathcal{E}}(\beta_{ABC})$ with $N_{ABC}$ using Eqs.~(\ref{nega:bc|a}, \ref{nega:b|ca}, \ref{nega:ab|c}),  we complete the proof:
\begin{equation}
\Delta_{\mathcal{E}}(\beta_{ABC}) = 2^{n-2} + 1/2.
\end{equation}
{
We have one remark. From Eq.~\eqref{eq:g4}, Eq.~\eqref{eq:g10} and Eq.~\eqref{eq:g12}, we know that the original tripartite negativity is
\begin{equation}
    N_{ABC}(\beta_{ABC}) = 2^{n-1} + 2^{n-2} - 1/2,
\end{equation}
which is strictly larger than $\Delta_{\mathcal{E}}(\beta_{ABC})$ whenever $n\ge 2$. Furthermore, $N_{ABC}(\beta_{ABC})/\Delta_{\mathcal{E}}(\beta_{ABC})$ goes to $2$ as $n$ goes to infinity.
}
\end{proof}

%-------------------------------------------------------------------------

\section{Observations on complete entanglement loss under classicalization}\label{ap:h}

In this Appendix, we propose two observations for the entangled states satisfying Condition~7. A similar observation has been made for pure states in Ref.~\cite{neven2018entanglement}.
\addtocounter{theorem}{1}
\begin{observation} \label{ob:subrank2}
Suppose that a tripartite state $\rho_{ABC}$ satisfies Condition 7.
If $\rho_{ABC}$ is entangled {for the bipartitions $A|BC$ and $B|AC$}, then the reduced state  $\rho_{AB} = \tr_C(\rho_{ABC})$ should have rank more than $2$.
\end{observation}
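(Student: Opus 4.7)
The plan is to prove the contrapositive: assume Condition~\ref{cond1} holds and that $\rho_{AB}$ has rank at most $2$, and conclude that $\rho_{ABC}$ must be separable across at least one of $A|BC$ or $B|AC$. The starting point is to observe that, for any rank-$1$ projective measurement $\{\ket{x_i}\!\bra{x_i}\}$ on $C$, one has $\rho_{AB}=\sum_i p_i\sigma_{\ket{x_i}}$, so $\rho_{AB}$ is a convex combination of separable states and hence separable; by the range criterion its support $\mathcal{V}$ must contain at least $\mathrm{rank}(\rho_{AB})$ linearly independent product vectors. If $\dim\mathcal{V}=1$, then $\rho_{AB}=\ket{a}\!\bra{a}_A\otimes\ket{b}\!\bra{b}_B$; the purity of the marginal $\rho_A$ forces $\rho_{ABC}=\ket{a}\!\bra{a}_A\otimes\rho_{BC}$, which is separable across $A|BC$, and we are done.

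The substantive case is $\dim\mathcal{V}=2$, where $\mathcal{V}$ must contain at least two linearly independent product vectors. I then invoke the standard classification of such subspaces: $\mathcal{V}$ is of one of three types, namely (a) $\mathcal{V}=\ket{a}_A\otimes\mathcal{U}_B$ for some $2$-dimensional subspace $\mathcal{U}_B\subseteq\mathcal{H}_B$, (b) symmetrically $\mathcal{V}=\mathcal{U}_A\otimes\ket{b}_B$, or (c) $\mathcal{V}$ contains exactly two non-proportional product vectors $\ket{a_1b_1}$ and $\ket{a_2b_2}$ with $\ket{a_1}\not\propto\ket{a_2}$ and $\ket{b_1}\not\propto\ket{b_2}$. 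In cases (a) and (b) the marginal $\rho_A$ or $\rho_B$ is pure, so $\rho_{ABC}$ factors as $\ket{a}\!\bra{a}_A\otimes\rho_{BC}$ or $\rho_{AC}\otimes\ket{b}\!\bra{b}_B$, immediately yielding separability across $A|BC$ or $B|AC$.

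For case (c), the range criterion implies that every separable state with support in $\mathcal{V}$ is a convex combination of $\ket{a_1b_1}\!\bra{a_1b_1}$ and $\ket{a_2b_2}\!\bra{a_2b_2}$. Since the support of $\rho_{ABC}$ lies in $\mathcal{V}\otimes\mathcal{H}_C$, I expand
\begin{align*}
\rho_{ABC}=\sum_{i,j\in\{1,2\}}\ket{a_ib_i}\!\bra{a_jb_j}_{AB}\otimes T_{ij},
\end{align*}
with operators $T_{ij}$ on $\mathcal{H}_C$. Requiring that $\bra{x}\rho_{ABC}\ket{x}$ have this convex form for every $\ket{x}$ forces $\bra{x}T_{12}\ket{x}=\bra{x}T_{21}\ket{x}=0$ for all $\ket{x}$, hence $T_{12}=T_{21}=0$. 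Consequently $\rho_{ABC}=\ket{a_1}\!\bra{a_1}_A\otimes\ket{b_1}\!\bra{b_1}_B\otimes T_{11}+\ket{a_2}\!\bra{a_2}_A\otimes\ket{b_2}\!\bra{b_2}_B\otimes T_{22}$ is fully separable, contradicting the entanglement assumption.

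The main obstacle I anticipate is the classification of $2$-dimensional subspaces containing two product vectors invoked in case (c): although a standard geometric fact about the Segre variety, it requires careful case analysis to exclude additional product vectors when the two known ones share no common factor. The subsequent step $T_{12}=T_{21}=0$ is then elementary, resting only on the fact that an operator with $\bra{x}T\ket{x}=0$ for all $\ket{x}$ must vanish identically.
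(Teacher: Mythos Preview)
Your proof is correct and the overall strategy matches the paper's: argue the contrapositive, note that $\rho_{AB}$ is separable with support $\mathcal V$ of dimension at most two, split into the three cases for $\mathcal V$, and in the nontrivial case~(c) conclude that $\rho_{ABC}$ is block-diagonal over the two product directions and hence fully separable. The execution of case~(c) is organized dually. The paper expands $\rho_{ABC}=\sum_{ij} M_{ij}\otimes\ket{i}\!\bra{j}$ in a basis of $C$, uses polarization on $C$ to write every $M_{ij}$ as a linear combination of conditional states $p_x\sigma_{\ket x}$ (hence lying in $\mathrm{span}\{\ket{00}\!\bra{00},\ket{ab}\!\bra{ab}\}$), regroups as $\ket{00}\!\bra{00}\otimes\tau_x+\ket{ab}\!\bra{ab}\otimes\tau_y$, and then invokes a dual product vector $\ket{\alpha\beta}$ orthogonal to $\ket{ab}$ to verify $\tau_x,\tau_y\geq 0$. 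You instead expand $\rho_{ABC}=\sum_{ij}\ket{a_ib_i}\!\bra{a_jb_j}\otimes T_{ij}$ directly in the $AB$-support basis and kill $T_{12}$ via $\langle x|T_{12}|x\rangle=0$ for all $\ket{x}$ (valid even though $T_{12}$ need not be Hermitian, by complex polarization). This is a bit more economical; the one point you leave implicit is that $T_{11},T_{22}\geq 0$, which you should state before concluding full separability---it follows from the very same hypothesis, since the diagonal coefficients $\langle x|T_{ii}|x\rangle$ are the nonnegative weights in the convex decomposition of $p_x\sigma_{\ket x}$.
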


We remark that the generalization of Observation~\ref{ob:subrank2} to the $n$-partite case is given in Appendix~\ref{ap:j} for $n>3$.

\begin{proof}
First we denote that
$p_x = \tr[\langle x|\rho_{ABC} |x\rangle]$
and {$\sigma_{\ket{x}} = \langle x|\rho_{ABC} |x\rangle/p_x$}.
Let us begin by recalling that any tripartite quantum state can be written as
\begin{align}
    \rho_{ABC} = \sum_{i,j} M_{ij} \otimes \ket{i}\!\bra{j},
\end{align}
where
$M_{ij}=\tr_C[\rho_{ABC} ( \id_{AB} \otimes \ket{j}\!\bra{i})]$.
For $i=j$, we have that $M_{ii} = p_i \sigma_{\ket{i}}$.
For $i\neq j$, $M_{ij}$ can be written as linear combinations of $p_x \sigma_{\ket{x}}$ for some $\ket{x}$, since any $\ket{j}\!\bra{i}$ can be decomposed using some projectors $\ket{x}\!\bra{x}$.
The more explicit form will be given below.

In the following, we will show the {contraposition} of the observation, that is, if $\rho_{ABC}$ satisfies Condition 7 and $\rho_{AB}$ has rank no more than $2$, then $\rho_{ABC}$ is {either separable for the bipartition $A|BC$ or separable for the bipartition $B|AC$}.
{
Since $\rho_{AB} = \sum_i p_i \sigma_{\ket{i}}$ where $\{\ket{i}\}$ is the computational orthonormal basis, and $\sigma_{\ket{i}}$ is separable for any $\ket{i}$ according to Condition 7, then $\rho_{AB}$ is also separable.
}
If $\rho_{AB}$ {has} rank $1$, it is easy to see that $\rho_{ABC}$ is a pure product state.
{
Further, let us consider the case that the separable state $\rho_{AB}$ has exactly rank $2$. Up to local unitary, we can assume the following decomposition:
\begin{align}
    \rho_{AB} = \alpha(\lambda \ket{00}\!\bra{00}
    +(1-\lambda)\ket{ab}\!\bra{ab}) + (1-\alpha) \sum_{i} \lambda_i |a_ib_i\rangle\langle a_ib_i|,
\end{align}
where $\ket{ab} \neq \ket{00}$, $\alpha, \lambda, \lambda_i \in [0,1]$.  

Denote $|\psi_1\rangle, |\psi_2\rangle$ the eigenstates of $\rho_{AB}$ with non-zero eigenvalues. Then $\ket{00}, \ket{ab}, \ket{a_ib_i}$ should be superpositions of $|\psi_1\rangle, |\psi_2\rangle$. Since $\ket{ab} \neq \ket{00}$, $|\psi_1\rangle, |\psi_2\rangle$ can also be written as superpositions of $\ket{00}, \ket{ab}$. Consequently, any $\ket{a_ib_i}$ can be written as superpositions of $\ket{00}, \ket{ab}$.

In the case that $|a\rangle = |0\rangle$, we have $|a_i\rangle = |0\rangle$, which implies that $\rho_A = \tr_{BC}(\rho_{ABC}) = \tr_B(\rho_{AB}) = |0\rangle\langle 0|$. Hence, $\rho_{ABC} = |0\rangle\langle 0|\otimes \rho_{BC}$, which contradicts the assumption that $\rho_{ABC}$ is entangled for the bipartition $A|BC$. Thus, $|a\rangle \neq |0\rangle$ should hold. Similarly, we have $|b\rangle \neq |0\rangle$.

Since $|a\rangle \neq |0\rangle$, $|b\rangle \neq |0\rangle$, then any non-trivial superposition of them is entangled. This leads to that $\ket{a_ib_i}$ should either be $|00\rangle$ or $|ab\rangle$ up to a phase.
}

Since the range of $\sigma_{\ket{x}}$ belongs to the range of $\rho_{AB}$ {and $\sigma_{\ket{x}}$ is separable}, we have
\begin{align}
    \sigma_{\ket{x}} =
    \lambda_x \ket{00}\!\bra{00}
    +(1-\lambda_x)\ket{ab}\!\bra{ab},
\end{align}
where $\sum_x p_x \lambda_x =\lambda$.
{Since $M_{ij}$ is a combination of $\sigma_{|x\rangle}$, $M_{ij}$ can be written as}
\begin{align}
    M_{ij} = X_{ij} \ket{00}\!\bra{00}
    +Y_{ij}\ket{ab}\!\bra{ab},
\end{align}
where
the coefficients $X_{ij}$ and $Y_{ij}$ are given by
combinations of $p_x \lambda_x$ for some $x$.
Accordingly, we can write
\begin{align}
    \rho_{ABC}
    =  \ket{00}\!\bra{00} \otimes \tau_x
    + \ket{ab}\!\bra{ab} \otimes \tau_y,
\end{align}
where
$\tau_x=\sum_{i,j} X_{ij} \ket{i}\!\bra{j}$ and 
$\tau_y=\sum_{i,j} Y_{ij} \ket{i}\!\bra{j}$.

To show that $\rho_{ABC}$ is fully separable, it is sufficient to prove that
the matrices $\tau_x$ and $\tau_y$ are positive semidefinite.
For that, we note that since $\ket{ab} \neq \ket{00}$, there exists a bipartite pure state $\ket{\alpha \beta}$ such that
$\braket{ab|\alpha \beta} = 0$ and $\braket{00|\alpha \beta} \neq 0$.
Then it holds that
\begin{equation}
    \braket{\alpha \beta \gamma|\rho_{ABC}|\alpha \beta \gamma}
    = |\braket{\alpha \beta|00}|^2 \braket{\gamma|\tau_x|\gamma} \geq 0,
\end{equation}
for any $\ket{\gamma}$.
This implies that $\braket{\gamma|\tau_x|\gamma} \geq 0$, that is, $\tau_x$ is positive semidefinite.
Similarly, we can show that $\tau_y$ is positive semidefinite.
Hence, we conclude that $\rho_{ABC}$ is fully separable{, which contradicts the assumption}.
\end{proof}

%-------------------------------------------------------------------------
In the case that the party $C$ is not entangled with $A$ and $B$, we have a similar requirement of the global state as in the following observation.
\begin{observation} \label{ob:wholerank2}
Suppose that a tripartite state $\rho_{ABC}$ satisfies Condition 7.
If $\rho_{ABC}$ is {entangled {for the bipartitions $A|BC$ and $B|AC$}} separable for $AB|C$, then it should have rank more than $2$.
\end{observation}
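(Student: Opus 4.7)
The plan is to argue by contraposition: assume $\rho_{ABC}$ has rank at most~$2$ while satisfying Condition~\ref{cond1} and $AB|C$-separability, and show that $\rho_{ABC}$ must in fact be fully separable, which contradicts the assumed entanglement across $A|BC$ and $B|AC$. The rank-$1$ subcase is immediate: $AB|C$-separability plus purity forces $\rho_{ABC}=|\psi\rangle\langle\psi|_{AB}\otimes|\phi\rangle\langle\phi|_C$, and projecting on any $|x\rangle_C$ with $\langle x|\phi\rangle\neq 0$ gives $\sigma_{|x\rangle}=|\psi\rangle\langle\psi|_{AB}$, which Condition~\ref{cond1} forces to be separable, so $|\psi\rangle_{AB}$ is itself a product vector.

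For rank $2$, I would invoke the range criterion for $AB|C$-separable states: the two-dimensional support is spanned by two linearly independent product vectors $|\psi_1\rangle_{AB}|\phi_1\rangle_C$ and $|\psi_2\rangle_{AB}|\phi_2\rangle_C$. I split into three subcases. If $|\phi_1\rangle\propto|\phi_2\rangle$, then $\rho_{ABC}=\sigma_{AB}\otimes|\phi\rangle\langle\phi|_C$ with $\sigma_{AB}$ of rank~$2$; applying Condition~\ref{cond1} at $|x\rangle=|\phi\rangle$ shows $\sigma_{AB}$ is separable, so $\rho_{ABC}$ is fully separable. The symmetric subcase $|\psi_1\rangle\propto|\psi_2\rangle$ reduces to $\rho_{ABC}=|\psi\rangle\langle\psi|_{AB}\otimes\tau_C$ and forces $|\psi\rangle_{AB}$ to be a product vector by the same measurement argument.

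The remaining subcase has both $\{|\psi_1\rangle,|\psi_2\rangle\}$ and $\{|\phi_1\rangle,|\phi_2\rangle\}$ linearly independent. The key sub-lemma is a Schmidt-rank computation: any superposition $\alpha|\psi_1\rangle_{AB}|\phi_1\rangle_C+\beta|\psi_2\rangle_{AB}|\phi_2\rangle_C$ with $\alpha,\beta\neq 0$ is a sum of two non-proportional rank-one tensors across $AB|C$, so its Schmidt rank equals~$2$; consequently the only $AB|C$-product vectors in the support are scalar multiples of $|\psi_1\rangle_{AB}|\phi_1\rangle_C$ and $|\psi_2\rangle_{AB}|\phi_2\rangle_C$. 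Hence every $AB|C$-separable decomposition of $\rho_{ABC}$ collapses to the diagonal form $\rho_{ABC}=q_1|\psi_1\phi_1\rangle\langle\psi_1\phi_1|+q_2|\psi_2\phi_2\rangle\langle\psi_2\phi_2|$. Choosing $|x\rangle_C$ orthogonal to $|\phi_2\rangle$ but not to $|\phi_1\rangle$ (possible by linear independence) yields $\sigma_{|x\rangle}=|\psi_1\rangle\langle\psi_1|_{AB}$, so Condition~\ref{cond1} forces $|\psi_1\rangle$ to be a product vector, and symmetrically $|\psi_2\rangle$ is a product, giving full separability.

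The main obstacle is this last subcase, and specifically the structural claim that the two-dimensional support contains no further $AB|C$-product vectors beyond the two obvious ones. This rules out any off-diagonal coherence between $|\psi_1\rangle_{AB}|\phi_1\rangle_C$ and $|\psi_2\rangle_{AB}|\phi_2\rangle_C$ in an $AB|C$-separable decomposition, after which the proof reduces to a single application of Condition~\ref{cond1} along a well-chosen direction on $C$, in close analogy with the rank-$2$ analysis of the reduced state in the proof of Observation~\ref{ob:subrank2}.
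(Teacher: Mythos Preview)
Your proof is correct, but it takes a different route from the paper's. The paper's argument is a one-line reduction to Observation~\ref{ob:subrank2}: from any $AB|C$-separable decomposition $\rho_{ABC}=\sum_i p_i |\psi_i\phi_i\rangle\langle\psi_i\phi_i|$ one has $\rho_{AB}=\sum_i p_i |\psi_i\rangle\langle\psi_i|$, and the global rank bound $\dim\operatorname{span}\{|\psi_i\phi_i\rangle\}\le 2$ forces $\dim\operatorname{span}\{|\psi_i\rangle\}\le 2$ (linear independence of any subfamily $\{|\psi_{i_j}\rangle\}$ implies that of $\{|\psi_{i_j}\phi_{i_j}\rangle\}$), so $\rho_{AB}$ has rank at most $2$ and Observation~\ref{ob:subrank2} applies directly.

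You instead analyze the two-dimensional global support by picking two spanning $AB|C$-product vectors and case-splitting on proportionality of their $AB$- or $C$-parts; in the generic case your Schmidt-rank sub-lemma pins down exactly two product directions in the support, which forces the separable decomposition to be diagonal and lets a single well-chosen $|x\rangle_C$ certify that each $|\psi_i\rangle_{AB}$ is itself product. This is self-contained (no appeal to Observation~\ref{ob:subrank2}) and yields full separability directly. The paper's approach, by contrast, is modular and makes transparent that Observation~\ref{ob:wholerank2} is essentially a corollary of Observation~\ref{ob:subrank2} via the marginal-rank bound; your Schmidt-rank sub-lemma is in fact the hidden content behind that rank bound, so the two arguments are closer than they look at first glance.
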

\begin{proof}
Here we prove {the statement} by contradiction.
Let us assume $\rho_{ABC}$ satisfies Condition 7 and has rank no more than $2$. Since $\rho_{ABC}$ is separable for the bipartition $AB|C$, we have the decomposition
\begin{equation}
     \rho_{ABC} = \sum_{i} p_i \ket{\psi_i\phi_i}\bra{\psi_i\phi_i},
  \end{equation}
  where $\ket{\psi_i}, \ket{\phi_i}$ are states for parties $A, B$ and party $C$, respectively.
  
 By assumption, the dimension of the space spanned by $\{\ket{\psi_i\phi_i}\}$ is no more than $2$, this leads to that the dimension of the space spanned by $\{\ket{\psi_i}\}$ is no more than $2$. Thus, $\rho_{AB} = \tr_C(\rho_{ABC}) = \sum_i p_i \ket{\psi_i}\!\bra{\psi_i}$ has rank no more than $2$. By applying Observation~\ref{ob:subrank2}, we finish the proof.
\end{proof}
{
We have two remarks.
First, one can indeed find tripartite entangled states satisfying Condition 7 and separable for the bipartition $AB|C$.
{Especially, there exist tripartite entangled states which are separable for any bipartition~\cite{bennett1999unextendible,acin2001classification}, which satisfy Condition 7 automatically.}
We collect more such examples in Appendix~\ref{ap:i}.
Second, Observations~\ref{ob:subrank2}, \ref{ob:wholerank2} may provide insight into a type of quantum marginal problem: whether a global state can be separable or entangled if its marginal systems are subjected to separability conditions and rank constraints.
}

%-------------------------------------------------------------------------
\section{Examples for three-qubit states}\label{ap:i}
Here, we discuss three-qubit entangled states that satisfy Condition 7 for the complete entanglement change.
In this Appendix, we will first propose a nontrivial three-qubit state that is entangled $A|BC$ and $AC|B$ but separable for $AB|C$.
Next, we will connect the complete entanglement change with bound entanglement.

\subsection*{I.1: Complete entanglement change with separability for $AB|C$}
To find a nontrivial three-qubit entangled state that satisfy Condition 7, we employ the method of entanglement witnesses:
For an Hermitian operator $W$, it is called an entanglement witness if
$\mathrm{tr} (W \rho_{s}) \geq 0$ for all separable states $\rho_{s}$,
and 
$\mathrm{tr} (W \rho_{e}) < 0$ for some entangled states $\rho_{e}$.
The latter allows us to detect entanglement.
In particular, we adopt the entanglement witness that can have the negative eigenvalues of its partial transpose (NPT) state.
This witness is described as follows:
Suppose that a state $\rho_{e}$ is NPT.
Then one can find a negative eigenvalue $\lambda <0$ of $\rho_{e}^{T_A}$ and the corresponding eigenvector $\ket{\phi_C}$.
Hence the operator $\ket{\phi_C}\!\bra{\phi_C}^{T_A}$ can be an witness to detect the entangled state $\rho_{e}$.

In practice, entanglement witnesses can be implemented by semi-definite programming (SDP).
For our purpose, we use the following conditions that are compatible with the SDP method.
First, to impose the separability condition for the bipartition $AB|C$, we apply the fact that if a $2 \otimes N$ state $\rho_{XY}$ obeys $\rho_{XY} = \rho_{XY}^{T_X}$, then it is separable, see Theorem $2$ in Ref.~\cite{kraus2000separability}.
That is, we require that $\rho_{ABC}=\rho_{ABC}^{T_C}$.
Second, for the separability condition of the two-qubit post-measurement state $\sigma_{\ket{x}}$, we employ the positive partial transpose (PPT) criterion, which is necessary and sufficient for two-qubit separability.
Third, for the sake of simplicity, we suppose that the state $\rho_{ABC}$ is invariant under exchange between $A$ and $B$ using SWAP operator $\mathrm{SWAP} \ket{a}\ket{b} = \ket{b}\ket{a}$.

Since the set of NPT states is not convex, we use the see-saw method with entanglement witnesses.
This is a numerical iteration technique for non-convex optimization, which allows us to find states with the (local) minimal value as a solution.
From the numerical solution, we can find an analytical form of the state and verify that it satisfies Condition $1$ for any measurement direction.
Our finding is the following entangled state:
\begin{align}
  \Tilde{\rho} &=\frac{1}{8}
\begin{bmatrix}
     0 & 0 & 0 & 0 & 0 & 0 & 0 & 0 \\
     0 & 2 & 0 & 0 & 0 & 0 & 0 & 0 \\
     0 & 0 & 1 & 0 & 0 & 1 & 0 & 0 \\
     0 & 0 & 0 & 1 & 1 & 0 & 0 & 0 \\
     0 & 0 & 0 & 1 & 1 & 0 & 0 & 0 \\
     0 & 0 & 1 & 0 & 0 & 1 & 0 & 0 \\
     0 & 0 & 0 & 0 & 0 & 0 & 2 & 0 \\
     0 & 0 & 0 & 0 & 0 & 0 & 0 & 0 \\
\end{bmatrix}.
\end{align}
This state has the following properties.
First, the matrix rank of $\Tilde{\rho}$ is $4$.
Second, one can show that the state $\sigma_{\ket{x}}$ with $\ket{x} = (\cos t, e^{i a} \sin t)$ is PPT and therefore separable for any $t,a$.
Third, the minimum eigenvalue of $\Tilde{\rho}^{T_A}$ is equal to $-1/8$.
Fourth, the party $C$ is not entangled with the other two parties. Nevertheless, the discord $D_{AB|C}(\tilde{\rho})>0$, which is necessary for complete entanglement change according to Observation~\ref{ob:relative2}.

\subsection*{I.2: Complete entanglement change and bound entanglement}
We have found the existence of state $\Tilde{\rho}$ that is entangled states for $A|BC$ and $AC|B$ but separable for $AB|C$ that can achieve the complete entanglement change.
Now we are also interested in the case where the separability for $AB|C$ is replaced by bound entanglement.
Such a state is already known as the $4\otimes 2$ bound entangled state~\cite{horodecki1997separability}, denoted by
\begin{equation}
\rho_{\rm HDK}=\frac{1}{h}
\begin{bmatrix}
 2 t & 0 & 0 & 0 & 0 & 0 & 2 t & 0 \\
 0 & 2 t & 0 & 0 & 0 & 0 & 0 & 2 t \\
 0 & 0 & t+1 & 0 & 0 & 0 & 0 & t^\prime \\
 0 & 0 & 0 & 2 t & 2 t & 0 & 0 & 0 \\
 0 & 0 & 0 & 2 t & 2 t & 0 & 0 & 0 \\
 0 & 0 & 0 & 0 & 0 & 2 t & 0 & 0 \\
 2 t & 0 & 0 & 0 & 0 & 0 & 2 t & 0 \\
 0 & 2 t & t^\prime & 0 & 0 & 0 & 0 & t+1
\end{bmatrix},
\end{equation}
where $t^\prime=\sqrt{1-t^2}$, $h={2(1+7t)}$ and $0<t<1$.
Here the parties $AB$ are in $4$-dimensional systems and the party $C$ is $2$-dimensional systems.
We remark that this state satisfies Condition $1$.
Since this state is NPT entangled for $A|BC$ and $AC|B$ but PPT entangled for $AB|C$, we cannot apply Observation~\ref{ob:wholerank2}.
On the other hand, its reduced state $\rho_{AB}$ has rank $4$, and therefore, it complies with Observation~\ref{ob:subrank2}.

To proceed further, we now present the following:
\begin{observation}
  If a tripartite state $\rho_{ABC}$ is separable either for the bipartition $A|BC$ or the bipartition $B|AC$, then $\rho_{ABC}$ satisfies Condition $1$.
\end{observation}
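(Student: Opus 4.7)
The plan is to unpack the definitions directly. Suppose, without loss of generality, that $\rho_{ABC}$ is separable across the bipartition $A|BC$, so it admits a decomposition $\rho_{ABC}=\sum_i p_i\,\rho^A_i\otimes\rho^{BC}_i$ with $p_i\ge 0$ summing to $1$, and local states $\rho^A_i$ on $A$ and $\rho^{BC}_i$ on $BC$. The aim is to verify the hypothesis of Condition 1: that the post-measurement state $\sigma_{\ket{x}}$ on $AB$ is separable across $A|B$ for every measurement direction $\ket{x}$ on $C$.

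The first step is to apply the measurement by sandwiching with $\bra{x}_C\cdots\ket{x}_C$. Since $\bra{x}_C$ acts only on $C$, it leaves the $A$ tensor factor untouched, giving
\[
\bra{x}_C \rho_{ABC} \ket{x}_C = \sum_i p_i\, \rho^A_i \otimes \omega^B_i,
\qquad
\omega^B_i := \bra{x}_C \rho^{BC}_i \ket{x}_C.
\]
Each $\omega^B_i$ is positive semidefinite on $B$, being a partial overlap of a positive operator with a pure state. After normalizing by $N_x=\tr[\bra{x}_C \rho_{ABC}\ket{x}_C]$, discarding any terms with $\omega^B_i=0$, and replacing the remaining $\omega^B_i$ by their normalized versions $\tilde{\omega}^B_i$, we arrive at
\[
\sigma_{\ket{x}} = \sum_i q_i\, \rho^A_i \otimes \tilde{\omega}^B_i,
\]
a convex combination of product states on $A\otimes B$, and therefore separable across $A|B$. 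Since $\ket{x}$ was arbitrary, the hypothesis of Condition 1 holds, and complete entanglement loss under classicalization follows. The case where $\rho_{ABC}$ is separable across $B|AC$ is handled by the symmetric argument, swapping the roles of $A$ and $B$.

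There is no real obstacle here: the key insight is simply that a local operation on $C$ cannot create entanglement across a cut that already separates $C$ from one of the remaining parties. The only minor bookkeeping concerns the possibility that some $\omega^B_i$ vanish for a given $\ket{x}$ (which is handled by dropping those terms before normalization) and the case $N_x=0$ (which can be excluded since then the measurement outcome has zero probability and does not contribute to $\Phi_C(\rho_{ABC})$).
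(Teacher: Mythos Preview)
Your argument is correct and takes essentially the same approach as the paper: both show that if $\rho_{ABC}$ is separable across $A|BC$ (or $B|AC$), then $\langle x|\rho_{ABC}|x\rangle$ is separable across $A|B$ for every $\ket{x}$, which is exactly the hypothesis of Condition~1. The paper states this in one line without expanding the decomposition, whereas you have written out the convex-combination computation explicitly; your handling of the $\omega^B_i=0$ and $N_x=0$ edge cases is a nice bit of care that the paper omits.
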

\begin{proof}
  If $\rho_{ABC}$ is separable either for $A|BC$ or $B|AC$, then the normalized state of $\bra{x}\rho_{ABC}\ket{x}$ is separable for any measurement direction $\ket{x}$ on $C$.
  Thus, Observation~\ref{ob:lower2} implies that the  entanglement change must be complete.
\end{proof}
In the following, we collect entangled states for complete entanglement change which are even separable for any bipartition:
\begin{align}
    \rho_{\rm UPB} &=\frac{1}{32}
\begin{bmatrix}
 7 & 1 & 1 & \bar{1} & 1 & \bar{1} & \bar{1} & 1 \\
 1 & 3 & \bar{1} & 1 & \bar{1} & \bar{3} & 1 & \bar{1} \\
 1 & \bar{1} & 3 & \bar{3} & \bar{1} & 1 & 1 & \bar{1} \\
 \bar{1} & 1 & \bar{3} & 3 & 1 & \bar{1} & \bar{1} & 1 \\
 1 & \bar{1} & \bar{1} & 1 & 3 & 1 & \bar{3} & \bar{1} \\
 \bar{1} & \bar{3} & 1 & \bar{1} & 1 & 3 & \bar{1} & 1 \\
 \bar{1} & 1 & 1 & \bar{1} & \bar{3} & \bar{1} & 3 & 1 \\
 1 & \bar{1} & \bar{1} & 1 & \bar{1} & 1 & 1 & 7 \\
\end{bmatrix},
\,\,\,\,\,\,\,\,\,\,\,\,\,\,\,\,\,\,\,\,\,\,
\rho_{\rm ADMA} =\frac{1}{n}
    \begin{bmatrix}
     1 & 0 & 0 & 0 & 0 & 0 & 0 & 1 \\
     0 & a & 0 & 0 & 0 & 0 & 0 & 0 \\
     0 & 0 & b & 0 & 0 & 0 & 0 & 0 \\
     0 & 0 & 0 & c & 0 & 0 & 0 & 0 \\
     0 & 0 & 0 & 0 & \frac{1}{c} & 0 & 0 & 0 \\
     0 & 0 & 0 & 0 & 0 & \frac{1}{b} & 0 & 0 \\
     0 & 0 & 0 & 0 & 0 & 0 & \frac{1}{a} & 0 \\
     1 & 0 & 0 & 0 & 0 & 0 & 0 & 1 \\
    \end{bmatrix},\\
\rho_{AK} &=\frac{1}{8(1+y)}
\begin{bmatrix}
 x & 0 & 0 & 0 & 0 & 0 & 0 & 2 \\
 0 & y & 0 & 0 & 0 & 0 & 2 & 0 \\
 0 & 0 & y & 0 & 0 & \bar{2} & 0 & 0 \\
 0 & 0 & 0 & y & 2 & 0 & 0 & 0 \\
 0 & 0 & 0 & 2 & y & 0 & 0 & 0 \\
 0 & 0 & \bar{2} & 0 & 0 & y & 0 & 0 \\
 0 & 2 & 0 & 0 & 0 & 0 & y & 0 \\
 2 & 0 & 0 & 0 & 0 & 0 & 0 & x \\
\end{bmatrix},
\,\,\,\,\,\,\,\,
\rho_{PH} =\frac{1}{m} 
\begin{bmatrix}
 2 z & 0 & 0 & 0 & 0 & 0 & 0 & 0 \\
 0 & 1 & 1 & 0 & 1 & 0 & 0 & 0 \\
 0 & 1 & 1 & 0 & 1 & 0 & 0 & 0 \\
 0 & 0 & 0 & \frac{1}{z} & 0 & 0 & 0 & 0 \\
 0 & 1 & 1 & 0 & 1 & 0 & 0 & 0 \\
 0 & 0 & 0 & 0 & 0 & \frac{1}{z} & 0 & 0 \\
 0 & 0 & 0 & 0 & 0 & 0 & \frac{1}{z} & 0 \\
 0 & 0 & 0 & 0 & 0 & 0 & 0 & 0 \\
\end{bmatrix},
\end{align}
where $\bar{1} = -1, \bar{2} = -2,  \bar{3} = -3$, $a,b,c,x,y,z > 0$,
$abc\neq 1$, $x=y+4$, $n=2 + 1/a + a + 1/b + b + 1/c + c$, and $m=3 + 3/z + 2 z$.
These states have been already known:
$\rho_{\rm UPB}$ in Ref.~\cite{bennett1999unextendible},
$\rho_{\rm ADMA}$ in Ref.~\cite{acin2001classification},
$\rho_{AK}$ in Ref.~\cite{kay2011optimal},
and {the Hyllus state}
$\rho_{PH}$ in Eq.~(2.105) in Ref.~\cite{hyllus2005witnessing}.
Note that $\rho_{AK}$ is entangled for $2\leq y\leq 2.828$ but separable for $y \geq 2\sqrt{2}$.
Also $\rho_{\rm UPB}$ is permutationally symmetric.

Let us summarize the property of these states.
The first common property of them is that they are separable for any bipartition, but not fully separable.
In that sense, they are not multipartite distillable and then bound entangled~\cite{guhne2011entanglement}.
Here we remark that GHZ diagonal states that are PPT for any bipartition are separable for any bipartition \cite{nagata2009necessary}.
Second, their matrix ranks are, respectively, given by
$\text{Rank}(\rho_{\rm UPB})=4,
\,
\text{Rank}(\rho_{\rm ADMA})=7,
\,
\text{Rank}(\rho_{\rm AK})=8,
\,
\text{Rank}(\rho_{\rm PH})=5.$
This follows the results of Observation~\ref{ob:wholerank2}.
Finally, these bound entangled states can be detected with the help of the previously presented entanglement criteria in Refs.~\cite{horodecki2006separability, guhne2010separability, guhne2011entanglement}.

The last example is the three-qubit thermal state with Heisenberg chain model:
\begin{align}
    &\rho_H = \exp{(-H_H/T)}/Z,\\
    &H_{H}=\sum_{i=1,2,3}
    \sigma_X^i \sigma_X^{i+1}
    +\sigma_Y^i \sigma_Y^{i+1}
    +\sigma_Z^i \sigma_Z^{i+1},
\end{align}
with temperature $T$ and $Z=\tr[\exp{(-H_H/T)}]$.
This thermal state has been shown to be bound entangled in the temperature range $T \in [4.33, 5.46]$, in the sense that they are separable for any bipartition but not fully separable in Refs.~\cite{eggeling2001separability, toth2007optimal} and Table II in \cite{toth2009spin}, where the bound entanglement can be detected by the optimal spin squeezing inequality.

%-------------------------------------------------------------------------
\section{Generalization of Observation~\ref{ob:subrank2}}\label{ap:j}
\begin{observation}\label{ob:multisubrank}
Let $\rho_{A_1\ldots A_{n-1} A_{n}}$ be a $n$-partite quantum state
and let $P_{n}^{\ket{x}}=\ket{x}\bra{x}$ be a projector on
the subsystem $A_{n}$ with $\sum_x P_{n}^{\ket{x}}=I$.
Suppose that the normalized state
$\sigma^{\ket{x}} = \tr_{n}(P_{n}^{\ket{x}} \rho_{A_1\ldots  A_{n-1} A_{n}})/p_n$ with $p_n=\tr(P_{n}^{\ket{x}} \rho_{A_1\ldots  A_{n-1} A_{n}})$
is fully separable for any $\ket{x}$,
and the reduced state
$\rho_{A_1\ldots A_{n-1}}=\tr_{n} (\rho_{A_1\ldots A_{n-1} A_{n}})$ can be written as
\begin{equation}
    \rho_{A_1\ldots A_{n-1}} = \sum_{i=1}^k p_i |\psi_i\rangle\langle \psi_i|,
\end{equation}
where $\{|\psi_i\rangle\}_{i=1}^k$ are linearly
independent fully product states, i.e.,
$\ket{\psi_i}=\bigotimes_{j=1}^n \ket{\psi_j^i}$, and any superposition of $\{|\psi_i\rangle\}_{i=1}^k$ given by
$\sum_i c_i \ket{\psi_i}$ is not a fully product state.
In this case, $\rho_{A_1\ldots A_{n-1} A_{n}}$ should be fully separable.
\end{observation}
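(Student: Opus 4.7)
The plan is to generalize the argument used in the proof of Observation~\ref{ob:subrank2} for the tripartite case. Fix an orthonormal basis $\{\ket{a}\}$ of $A_n$ and expand
\begin{equation}
\rho_{A_1\ldots A_n} = \sum_{a,b} M_{ab} \otimes \ket{a}\!\bra{b},
\end{equation}
with $M_{ab} = \tr_n[\rho_{A_1\ldots A_n}(\id_{A_1\ldots A_{n-1}} \otimes \ket{b}\!\bra{a})]$ acting on $A_1\ldots A_{n-1}$. The diagonal blocks satisfy $M_{aa} \propto \sigma^{\ket{a}}$, while the off-diagonal blocks $M_{ab}$ can be written as complex linear combinations of post-measurement states $\sigma^{\ket{x}}$, since every rank-one operator $\ket{b}\!\bra{a}$ is expressible as a combination of projectors $\ket{x}\!\bra{x}$ via the polarization identity.

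The key structural step is to show that every $M_{ab}$ lies in the linear span of $\{\ket{\psi_i}\!\bra{\psi_i}\}_{i=1}^k$. The range of each $\sigma^{\ket{x}}$ is contained in the range of the marginal $\rho_{A_1\ldots A_{n-1}} = \sum_i p_i \ket{\psi_i}\!\bra{\psi_i}$, which is $\mathrm{span}\{\ket{\psi_i}\}$. Because $\sigma^{\ket{x}}$ is fully separable by assumption, it admits a decomposition into fully product pure states $\ket{\phi}$ lying in this span, so $\ket{\phi} = \sum_i c_i \ket{\psi_i}$. The hypothesis that no nontrivial superposition of the $\ket{\psi_i}$'s is fully product forces each such $\ket{\phi}$ to be proportional to a single $\ket{\psi_i}$, so $\sigma^{\ket{x}}$ is a convex mixture of $\ket{\psi_i}\!\bra{\psi_i}$'s. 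Taking linear combinations yields the compact form
\begin{equation}
\rho_{A_1\ldots A_n} = \sum_{i=1}^k \ket{\psi_i}\!\bra{\psi_i} \otimes \tau_i,
\end{equation}
for some operators $\tau_i$ on $A_n$.

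To complete the argument, I would verify that each $\tau_i$ is positive semidefinite. Since the vectors $\{\ket{\psi_i}\}_{i=1}^k$ are linearly independent, there exist biorthogonal duals $\{\ket{\phi_i}\}$ with $\braket{\phi_i|\psi_j}=\delta_{ij}$, so for any $\ket{\gamma}$ on $A_n$,
\begin{equation}
\braket{\gamma|\tau_i|\gamma} = (\bra{\phi_i}\otimes\bra{\gamma})\,\rho_{A_1\ldots A_n}\,(\ket{\phi_i}\otimes\ket{\gamma}) \geq 0.
\end{equation}
Because each $\ket{\psi_i}=\bigotimes_{j=1}^{n-1}\ket{\psi_j^i}$ is fully product across $A_1,\ldots,A_{n-1}$, spectrally decomposing $\tau_i$ exhibits every summand as a convex combination of $n$-fold product pure states, so $\rho_{A_1\ldots A_n}$ is fully separable.

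The main obstacle is the structural step that upgrades the range inclusion into the decomposition $\sigma^{\ket{x}} = \sum_i \lambda_i^{(x)}\ket{\psi_i}\!\bra{\psi_i}$. This relies crucially on the combined assumptions that $\{\ket{\psi_i}\}$ are linearly independent and that no nontrivial superposition of them is fully product, which together pin down the fully product vectors available in $\mathrm{span}\{\ket{\psi_i}\}$ up to scalar multiples of the $\ket{\psi_i}$ themselves. Formulating and applying this uniqueness cleanly is where care is needed; the remaining steps parallel the tripartite proof of Observation~\ref{ob:subrank2}.
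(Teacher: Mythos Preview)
Your proposal is correct and follows essentially the same approach as the paper's proof: expand in blocks $M_{ab}$, use that each $\sigma^{\ket{x}}$ is a fully separable state supported on $\mathrm{span}\{\ket{\psi_i}\}$ together with the no-product-superposition hypothesis to force $\sigma^{\ket{x}}\in\mathrm{span}\{\ket{\psi_i}\!\bra{\psi_i}\}$, deduce $\rho=\sum_i\ket{\psi_i}\!\bra{\psi_i}\otimes\tau_i$, and certify $\tau_i\succeq 0$ via biorthogonal duals. The only cosmetic difference is that the paper takes duals with $\braket{\phi_i|\psi_j}=0$ for $i\neq j$ and $\braket{\phi_i|\psi_i}>0$ rather than the normalized $\delta_{ij}$, which is immaterial.
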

\begin{proof}
We begin by recalling that any $n$-particle state can be written as
\begin{align}
    \rho_{A_1\ldots A_{n-1} A_{n}} = \sum_{i,j} M_{ij}\otimes \ket{i}\!\bra{j},
\end{align}
where $M_{ij}$ is a matrix on the $A_1\ldots A_{n-1}$ system and $\ket{i}\!\bra{j}$ is on the $A_n$ system.
Then, from the assumption, we notice
\begin{align}
  \rho_{A_1\ldots A_{n-1}}=\sum_i M_{ii}=\sum_{i=1}^k p_i |\psi_i\rangle\langle \psi_i|.
\end{align}
Note that $M_{ii}=\sigma^{\ket{i}}$,
which implies that the range of $\sigma^{\ket{x}}$ is in the subspace
spanned by $\{\ket{\psi_i}\}$. From the assumption that $\sigma^{\ket{x}}$ is separable and any superposition of $\{\ket{\psi_i}\} $  is entangled, we have
\begin{align}
     \sigma^{\ket{x}} = \sum_j q_j^x \ket{\psi_j}\!\bra{\psi_j}.
\end{align}
Also $M_{ij}$ can be written in the linear combination
of $\sigma^{\ket{x}}$.
Accordingly, we have
\begin{align}
    \rho_{A_1\ldots A_{n-1} A_{n}}
    = \sum_{i,j,k} c_{ijk} \ket{\psi_k}\!\bra{\psi_k}
    \otimes \ket{i}\!\bra{j}
    = \sum_{k} \ket{\psi_k}\!\bra{\psi_k}
    \otimes \tau_k,
\end{align}
where 
$\tau_k =\sum_{ij} c_{ijk} \ket{i}\!\bra{j}$, and $c_{ijk}$ is the coefficient of $\ket{\psi_k}\!\bra{\psi_k}$ when we expand $M_{ij}$.
Below we show that $\tau_k$ is positive semidefinite.

From the assumption that $\{|\psi_i\rangle\}_{i=1}^k$
are linearly independent, we know that there are states
$\{|\phi_i\rangle\}_{i=1}^k$ such that
\begin{align}
  &\braket{\psi_i|\phi_j} = 0,\text{ if } i\neq j,\ \braket{\psi_i|\phi_i} > 0.
\end{align}
Then, for any $\ket{v}$,
\begin{equation}
    \bra{\phi_i v} \rho_{A_1\ldots A_{n-1} A_{n}}
    \ket{\phi_i v} = \braket{\phi_i|\psi_i}^2 \bra{v}\tau_i\ket{v} \ge 0,
\end{equation}
that is, $\bra{v}\tau_i\ket{v} \ge 0$. This implies that $\tau_i$ is positive semidefinite. Hence, $\rho_{A_1\ldots A_{n-1} A_{n}}$
is a fully separable state.
\end{proof}
\bibliography{references.bib}

%%%%%%%%%%%%%%%%%%%%%%%%%%%%%%%%%%%%%%%%%%%%%%%%%%%%%%%%%%%%%%%%%%%%%%%%%%%%%%%%%%%%%%%
\end{document}